\documentclass[a4paper,UKenglish]{lipics-v2021}

\usepackage{mathtools}
\usepackage{booktabs}
\usepackage{float}
\usepackage{tikz}
\usetikzlibrary{arrows.meta,fit,backgrounds}

\pdfoutput=1
\hideLIPIcs
\nolinenumbers

\title{Two-Variable Logic with Arbitrarily Many Successor Relations}
\titlerunning{Two-Variable Logic with Arbitrarily Many Successor Relations}

\author{Jakub Michaliszyn}
{University of Wrocław, Wrocław, Poland}
{}
{}
{}

\author{Piotr Witkowski}
{University of Wrocław, Wrocław, Poland}
{}
{}
{}

\authorrunning{J. Michaliszyn and P. Witkowski}

\Copyright{Jakub Michaliszyn and Piotr Witkowski}

\ccsdesc[500]{Theory of computation~Finite Model Theory}
\ccsdesc[500]{Theory of computation~Logic and verification}

\keywords{two-variable logic, successor relations, satisfiability,
finite models, infinite models}
\newcommand{\FO}{\mathrm{FO}^{2}}
\newcommand{\Rays}{\operatorname{Rays}}
\newcommand{\Match}{\operatorname{Match}}
\newcommand{\Inv}{\operatorname{inv}}
\newcommand{\tp}{\operatorname{tp}}
\newcommand{\NN}{\mathbb{N}}
\newcommand{\ZZ}{\mathbb{Z}}
\newcommand{\QQ}{\mathbb{Q}}
\newcommand{\NEXP}{\mathsf{NEXPTIME}}
\newcommand{\TwoNEXP}{\mathsf{2\text{-}NEXPTIME}}
\newcommand{\Dis}{\operatorname{D}}
\newcommand{\Adj}{\operatorname{Adj}}
\begin{document}
\maketitle
\begin{abstract}
We study two-variable first-order logic with a finite, input-dependent
number of distinguished successor relations and arbitrary additional unary
and binary predicates. Each distinguished relation must be the exact
immediate-successor relation of some linear order on the common domain;
the orders themselves are not available in the language and may have
arbitrary order types. We prove that satisfiability belongs to
$\TwoNEXP$, uniformly in the number of successors.

The proof gives finite certificates for possibly infinite models.
A closure operation separates local neighbourhood descriptions, called
star types, into those of bounded multiplicity and those that can be
realized infinitely often. The bounded part is
represented explicitly. Outside it, integer height vectors allow
overlapping successor requirements to be matched without creating cycles
or unintended identifications. Ordering the resulting path components
densely then recovers exact successor relations. Consistent assignments of the additional binary predicates handle
witnesses involving the bounded part.

The same certificates decide infinite satisfiability in $\TwoNEXP$ and
give a doubly exponential threshold above which a finite model guarantees
an infinite model. A doubly exponentially large finite core can be forced with only two
successors and unary predicates. For finite models, we prove that
satisfiability subject to a bound on the total number of reference
adjacencies lost by the other orders is $\NEXP$-complete, even when
the budget is encoded in binary and the number of successors is part
of the input.
\end{abstract}
\clearpage

\section{Introduction}
We prove that satisfiability for two-variable first-order logic remains
decidable in the presence of \emph{arbitrarily many independent successor
relations}. The number of successors is part of the input, additional
unary and binary predicates are unrestricted, and our
$\TwoNEXP$ upper bound is uniform in all these parameters. Each
distinguished relation is the exact immediate-successor relation of a
linear order on the common domain. The orders may have arbitrary order
types and are absent from the signature.

This result identifies a boundary between access to order and access to
adjacency. Ordinary two-variable first-order logic, $\FO$, uses only two
variable names, reused under quantification; its satisfiability problem
is $\NEXP$-complete and it has the finite-model
property~\cite{mortimer1975,gradel1997}. Requiring three predicates to
be linear orders already makes satisfiability undecidable, even with
only additional unary predicates~\cite{kieronski2011orders}. For
successors, finite satisfiability with two distinguished relations is
$\NEXP$-complete~\cite{figueira2013}, also with additional binary
predicates~\cite{charatonik2016}. Our main theorem allows any
input-dependent number of successors in arbitrary models, with no
hierarchy or restriction on how their edges overlap. Different
successors may share an edge or traverse the same pair in opposite
directions. Here $\NEXP$ and $\TwoNEXP$ denote nondeterministic singly
and doubly exponential time, respectively.

Passing to arbitrary models changes the problem substantially. Every
successor relation on a finite domain is a single directed spanning
path. On an infinite domain, an exact successor graph may have many
components, or even no edges. Moreover, successor constraints destroy
the finite-model property: $\forall x\exists y\,S_1(x,y)$ has a model
over $\ZZ$ and no finite model. An empty successor relation on at least
two elements is realizable over $\QQ$, but cannot occur in a finite
admissible model. A decision procedure must therefore account for
models that are necessarily infinite.

\begin{theorem}\label{thm:main}
Satisfiability of $\FO$ with any finite number of distinguished exact
successor relations belongs to $\TwoNEXP$, uniformly in the number of
successors, with arbitrary additional unary and binary predicates.
It is $\NEXP$-hard already with one distinguished successor.
\end{theorem}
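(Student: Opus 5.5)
The proof has two independent parts: a $\TwoNEXP$ upper bound via finite certificates, and $\NEXP$-hardness by reduction from plain $\FO$.

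\textbf{Lower bound.} This is the easy half. I will reduce satisfiability of plain $\FO$, which is $\NEXP$-hard~\cite{gradel1997}, to the case with one successor $S_1$ that the formula never mentions. Any model of $\varphi$ carries some linear order: take a finite model (by the finite-model property) and a spanning path on it. Its successor relation is exact, so $\varphi$ is satisfiable iff it is satisfiable with an added exact successor.

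\textbf{Upper bound.} After converting to Scott normal form $\forall x\forall y\,\chi\wedge\bigwedge_i\forall x\exists y\,\psi_i$, with $k$ successors $S_1,\dots,S_k$, I plan to guess a certificate of doubly exponential size and verify it in time polynomial in its size. The construction proceeds as follows.
\begin{enumerate}
\item Define star types: the $1$-type of an element together with the multiset of $2$-types to its witnesses and to its at most $2k$ successor and predecessor neighbours. There are exponentially many $1$-types, so there are doubly exponentially many star types.
\item Apply a closure (fixpoint) operation that splits star types into \emph{bounded} ones, forced to occur at most doubly exponentially often, and \emph{unbounded} ones, which can be realized infinitely often or pumped.
\item Make the certificate an explicit finite core for the bounded part, the set of unbounded star types, and consistent assignments of binary $2$-types between the core and unbounded types.
\end{enumerate}

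\textbf{Soundness.} Given such a certificate, I must build a model. I would take $\NN$ or $\ZZ$ copies of each unbounded star type. Successor requirements are realized as disjoint paths, guided by integer height vectors $h\in\ZZ^k$ with $S_j$ increasing coordinate $j$. These guarantee that the paths never close cycles and never identify distinct elements, even when different successors share edges or traverse an edge in opposite directions. Each $S_j$ then yields a family of path components (finite, $\omega$, $\omega^*$, or $\ZZ$). I would order these components densely, as a $\QQ$-indexed sum, so that no new immediate successors arise. This makes each $S_j$ exactly the successor relation of a linear order. Witnesses that are neither successor neighbours nor in the core are supplied by the usual circular witness scheme of $\FO$. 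Pairs left unconstrained receive a $2$-type compatible with $\chi$; the certificate records that one exists for every combination of $1$-types.

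\textbf{Completeness.} Given any model, I extract the certificate by keeping the elements whose star types occur boundedly often, and abstracting the rest. Checking the certificate's local conditions, via the closure operation and type consistency, takes time polynomial in its doubly exponential size, which gives $\TwoNEXP$.

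\textbf{Main obstacle.} I expect the height-vector matching step in soundness to be the hardest: producing exact successors for $k$ interacting orders simultaneously, with no spurious adjacency. My first attempt there is to choose heights so that every $S_j$-edge strictly increases a potential function, and then to use Hall-type matching counts, which the closure guarantees, to pair up unbounded star types.
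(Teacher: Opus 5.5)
Your lower bound is fine; you do not even need the finite-model property, since the immediate-successor relation of any linear order on any domain is admissible. Your upper-bound architecture is the paper's: star types, a boundedness closure, an explicit core, heights in $\ZZ^k$, and a $\QQ$-indexed ordering of components. The gap is that the closure, which carries the whole argument, is never specified, and two things you assert in its place are wrong.

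\emph{The complexity accounting is inconsistent.} The natural propagation rule works as follows: if every match of some ray $e$ of $p$ is already bounded, injectivity of a port in the mask of $e$ gives $K_p=\sum_{q\in\Match_P(p,e)}K_q$. This can double the bound at each of up to $|P|$ steps. Appendix~\ref{sec:large-core} shows that cores exponential in the number of star types really occur. So with your count of doubly exponentially many star types, this argument bounds the core only triply exponentially. In fact a star type is described by a 1-type and at most $m+2k$ 2-types, so there are only singly exponentially many. That is exactly what makes $\sum_p K_p\le(d+1)2^{|P|}$ doubly exponential.

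\emph{The far 2-type claim is false.} Recording a $\chi$-compatible far 2-type ``for every combination of 1-types'' fails for satisfiable sentences. If $\chi$ forces a unique $K$-element, no far assignment exists between two elements of that 1-type, so no certificate extracted from a model can record one. Soundness only needs every pair with a non-core endpoint to be fillable. That is what the seed rule guarantees: $p$ is bounded if $A_{\tp(p)\tau}=\varnothing$ for some realized 1-type $\tau$. Likewise, the fixed point of the propagation rule guarantees that every ray of an unbounded type has an unbounded match. This is all the height matching needs; no Hall-type counting arises, and the closure provides no counts anyway. Let $\delta(M)\in\{-1,0,1\}^k$ be the signed displacement of a mask $M$. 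With infinitely many copies at every height, the occurrences of a ray with mask $M$ at height $\mathbf z$ and the inverse occurrences at $\mathbf z+\delta(M)$ are both countably infinite, so they can be matched bijectively.

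Your certificate is also too thin to be sound:
\begin{itemize}
\item Neighbours of core elements are not far from the core. They must be stored explicitly with it, with a full pair table on this interface.
\item One unbounded star type may need different assignments to the core in different copies. If $E$ is asymmetric, one copy $z$ may need $E(z,c)$ while another must witness $E(c,z)$. So you need a set of rows, each a star type together with one assignment to \emph{every} core element, checked for feasibility against all obligations simultaneously. Copies are then made of rows, not of star types.
\item When a core element $c$ takes a far witness from a row, the row's entry at $c$ must be the inverse of the assignment chosen from $c$, as in~\eqref{eq:pin}.
\item Your circular witness scheme works only if successor edges stay inside copies, so that the selected pairs are far.
\item If every realized star type is bounded, the model is finite and dense completion is unavailable. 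You must then guess the structure explicitly and check that each successor is a single spanning path.
\end{itemize}
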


\paragraph*{Finite certificates for infinite models.}
The main technical contribution is a finite certificate that describes
all successor neighbourhoods and consistently supplies the witnesses
required by the formula. Two obstacles have to be handled jointly.
Successor positions belonging to different relations may be forced to
name the same neighbour, so matching colours independently can destroy
local requirements. Some neighbourhoods can occur only finitely often,
so duplicating every local description cannot supply an infinite model.
Additional binary predicates make this finite part especially
delicate: several witness obligations can impose incompatible
assignments on the same pair.

Our construction resolves these obstacles in four steps. A
\emph{star type} records a complete successor neighbourhood, including
shared neighbours. A closure operation propagates multiplicity bounds
through these descriptions and separates the bounded part from types
that can be realized infinitely often. We store the bounded elements
and their neighbours in a finite \emph{interface}; rows record
consistent binary assignments to the bounded elements. Outside the
interface, integer height vectors allow whole neighbourhood masks to
be matched simultaneously, preserving shared edges and excluding
directed cycles in every colour. Finally, a dense ordering of the
resulting path components makes each graph an exact successor
relation. The freedom to choose the invisible underlying orders is
what permits this last step.

\paragraph*{Consequences and the size of the finite part.}
The certificates also decide \emph{infinite satisfiability} in
$\TwoNEXP$ and yield countably infinite models. They give an effective
doubly exponential threshold above which a finite model guarantees an
infinite model (Section~\ref{sec:consequences}). The finite part can
itself be doubly exponential: for every $r\geq1$, a sentence of size
polynomial in $r$, using only two successors and unary predicates,
forces exactly $2^{2^r}-1$ elements in a designated finite set, while
requiring the whole model to be infinite
(Appendix~\ref{sec:large-core}). This establishes a structural
limitation of certificates that list the core explicitly. The exact
complexity between the $\NEXP$ lower bound and our $\TwoNEXP$
upper bound remains open.

\paragraph*{A finite-model counterpart with a binary budget.}
Our second result allows arbitrarily many successors on finite domains
when their adjacency disagreement is bounded. Choose one order as a
reference and count its directed successor edges missing from the
other orders, with multiplicity across orders. Given a budget $b$,
the resulting finite satisfiability problem is $\NEXP$-complete
\emph{even when both the number of successors and the binary-encoded
budget are part of the input} (Section~\ref{sec:budget}). The
restriction places no bound on the domain size. Binary ranks using
$O(\log(b+2))$ unary bits per relation exclude cycles, allowing a
polynomial-size reduction to two-variable counting logic with one
successor. Thus the time bound depends on the length of the budget's
encoding. A separate decomposition into common blocks 
gives a model-size bound singly exponential in the value of $b$.
These results leave unrestricted finite satisfiability with three or
more successors unresolved.

\subsection{Related work}
The closest results concern finite structures with two successors.
The other comparisons below clarify the roles of arbitrary order
types, invisible orders, and unrestricted binary predicates.

\paragraph*{Successors and forests.}
The finite two-successor problem was studied by
Manuel~\cite{manuel2010successors}; Figueira proved
$\NEXP$-completeness with additional unary
predicates~\cite{figueira2013}. Charatonik and Witkowski obtained
finite $\NEXP$-completeness for counting logic over two forests of
ranked trees, including additional binary
predicates~\cite{charatonik2016}. Charatonik, Guskov, Pratt-Hartmann,
and Witkowski lifted the rank restriction and also proved general
satisfiability decidable for one forest~\cite{charatonikEtAl2018}.
Their finite one-forest result supplies the one-successor procedure
used in our binary-budget reduction. Our unrestricted theorem concerns
arbitrary models with an input-dependent number of exact successors;
our finite theorem controls their disagreement.

\paragraph*{Visible orders and arbitrary order types.}
Schwentick and Zeume proved finite-model decidability for two linear
orders and related preorder signatures with additional unary
predicates~\cite{schwentick2012}. Manuel and Zeume treated two
successors with one corresponding order and additional unary
predicates~\cite{manuel2013}; Zeume
and Harwath extended that result to additional binary predicates in
their study of order-invariance~\cite{zeume2016}. Their exponential
small-model bound for one order and its successor also underlies our
finite-model size bound.
For labelled linear orders without a successor predicate, Manuel and
Sreejith obtained $\NEXP$-completeness over arbitrary, countable,
and scattered orders~\cite{manuel2016countable}. Toru\'nczyk and
Zeume proved general satisfiability decidable for two linear orders
with arbitrary additional predicates~\cite{torunczykZeume2022}.
Our use of arbitrary order types has a different purpose: it lets us
order successor components densely after their edges have been fixed.

\paragraph*{Words and data.}
Etessami, Vardi, and Wilke characterized $\FO$ with order and
successor on finite and one-sided infinite words by temporal logic
with unary future and past operators~\cite{etessamiEtAl2002}.
Boja\'nczyk, David, Muscholl, Schwentick, and Segoufin proved
decidability when data equality is also
available~\cite{bojanczykEtAl2011datawords}. These settings combine
one sequence with further structure; we allow many independent
adjacency relations on a common domain.

\paragraph*{Other semantic constraints.}
A single equivalence relation preserves $\NEXP$-completeness and
the finite-model property~\cite{kieronskiOtto2012}. Kiero\'nski,
Michaliszyn, Pratt-Hartmann, and Tendera proved
$\TwoNEXP$-completeness for two equivalence closures, with hardness
already for two equivalence relations; three equivalence relations
give undecidability~\cite{kieronskiEtAl2014}. Fiuk, Kiero\'nski,
and Michielini obtained finite $\NEXP$-completeness for nested
equivalences with a linear order and for nested total preorders;
adding successors of the preorders yields
$\mathsf{EXPSPACE}$-completeness~\cite{fiukEtAl2025}. Our successors
need no nesting.
For transitivity and related constraints, decidability results depend
on finiteness~\cite{prattHartmann2018}, guarded
witnesses~\cite{szwastTendera2019,kieronskiTendera2018}, a universal
fragment with constants and transitive
closure~\cite{kieronskiMichaliszyn2012}, or restricted universal
comparability~\cite{marzecTendera2026}. Our existential witnesses
may lie outside every successor neighbourhood.

\paragraph*{Types and counting.}
Our proof builds on local types and finite exceptional
parts~\cite{kieronskiOtto2012}, and on separating witnesses into
groups~\cite{otto2001,zeume2016}. The additional ingredients are
multiplicity propagation through complete neighbourhoods, consistent
rows to the bounded part, and simultaneous matching at integer heights.
Pratt-Hartmann's $\NEXP$ procedure for two-variable counting logic,
$\mathrm C^2$, uses star types and integer multiplicities, even with
binary-encoded counts~\cite{prattHartmann2005counting}. In counting
logic, one order and its successor yield finite
$\NEXP$-completeness with additional unary predicates; with
additional binary predicates the problem is equivalent under
elementary reductions to multicounter-automata
emptiness~\cite{charatonikWitkowski2016order}. Our budget reduction
uses one successor without an order predicate.

\paragraph*{Reading guide.}
Section~\ref{sec:preliminaries} fixes the semantics and local descriptions.
Sections~\ref{sec:star-types}--\ref{sec:realization} develop the
certificates; Section~\ref{sec:decision} derives the decision procedures.
The finite-model argument in Section~\ref{sec:budget} is independent
and can be read after the preliminaries. The appendices give
normalization details, the large-core construction and a worked
certificate.

\newpage
\section{Preliminaries}\label{sec:preliminaries}
All structures have nonempty domains and finite relational signatures,
with equality, unary and binary predicates, and no constants or
functions. The fragment $\FO$ uses only the variables $x,y$, which may
be reused under quantification. The explicitly represented signature
contains distinguished predicates $S_1,\ldots,S_k$, with $k\geq1$.
All other predicates are unrestricted.

\begin{definition}\label{def:admissible}
A structure on $A$ is \emph{admissible} if for each $i$ there is a
strict linear order $<_i$ on $A$ such that
\[
 S_i(a,b)\quad\Longleftrightarrow\quad
 a<_i b\ \land\ \neg\exists c\,(a<_i c<_i b).
\]
The orders may have arbitrary order types and are not part of the
signature. Satisfiability means existence of an admissible model.
\end{definition}

Write $n$ for the input length, including its signature.
Scott normalization, recalled in Appendix~\ref{app:normalization},
produces in polynomial time a sentence in \emph{Scott normal form}
\begin{equation}
 \Phi=\forall x\forall y\,\alpha(x,y)\ \land\
       \bigwedge_{j=1}^{m}\forall x\exists y\,\beta_j(x,y),
 \label{eq:scott}
\end{equation}
where all matrices are quantifier-free. An original structure satisfies
the input precisely when some expansion by fresh unary predicates
satisfies $\Phi$. Thus normalization preserves the domain and every
original predicate. We assume that $\alpha$ is symmetric, replacing it
by $\alpha(x,y)\land\alpha(y,x)$ if necessary. We call the requirement
to satisfy $\beta_j$ at an element its $j$th \emph{witness obligation}.

\paragraph*{Neighbourhoods and masks.}
The graph of $S_i$ is called the \emph{colour-$i$ graph}.
Two distinct elements $a,b$ are \emph{neighbours} if $S_i(a,b)$ or
$S_i(b,a)$ holds for some $i$. The symbols $i^+$ and $i^-$ denote
the successor and predecessor \emph{ports} of colour $i$: the positions
at which an element can have its $S_i$-successor and $S_i$-predecessor.
Each port has at most one neighbour, giving at most $d=2k$ neighbours
per element.
For distinct $a,b$, their \emph{mask} is
$M(a,b)=\{i^+:S_i(a,b)\}\cup\{i^-:S_i(b,a)\}$.
A legal mask contains at most one sign of each colour, and $M^{-1}$
reverses all signs. A \emph{far pair} is a pair of distinct elements
with empty mask; we also say that its endpoints are \emph{far} from
each other. A nonempty mask records a single neighbour, possibly
shared by several colours: $\{1^+,2^-\}$ means that this neighbour is
both the $S_1$-successor and the $S_2$-predecessor of the source.

\paragraph*{Types and assignments.}
After normalization, let $u$ count the unary predicates and $a$ the
additional binary predicates. A \emph{1-type} records their $u$ unary
values and $a$ diagonal values $R(x,x)$; all $S_i(x,x)$ are false.
Thus it uses $u+a$ bits, giving at most $2^{u+a}$ 1-types.
The parameters $u,a,k,m$ are polynomially bounded in $n$, since
normalization has polynomial size.
Let $\mathcal R$ be the set of additional binary relation symbols, and
let $\mathcal R^{\pm}=\{R,R^{-1}:R\in\mathcal R\}$, where $R^{-1}$ is
a formal converse symbol. A \emph{binary assignment} is a set
$\gamma\in\Gamma=\mathcal P(\mathcal R^{\pm})$. On distinct endpoints
$a,b$, membership of $R$ means $R(a,b)$, and membership of $R^{-1}$
means $R(b,a)$; absent symbols mean false atoms. The inverse
$\Inv(\gamma)$ exchanges $R$ and $R^{-1}$. Each of the $a$ predicates
has two directed truth values, so an assignment uses $2a$ bits and
$|\Gamma|=2^{2a}$.
For 1-types $\sigma,\tau$, a legal mask $M$, and $\gamma\in\Gamma$,
the tuple $(\sigma,\tau,M,\gamma)$ determines a \emph{2-type}: the
complete atomic description of two distinct endpoints, including their
1-types and the literal $x\ne y$.  For a quantifier-free formula
$\theta$, let $\theta[\sigma;=]$ be its truth value when $x,y$ denote
the same element of 1-type $\sigma$, and let
$\theta[\sigma,\tau,M,\gamma]$ be its truth value on that 2-type. The
permitted assignments for a far pair are
\begin{equation}
 A_{\sigma\tau}=\{\gamma\in\Gamma:
       \alpha[\sigma,\tau,\varnothing,\gamma]\}.
 \label{eq:far-table}
\end{equation}
Symmetry gives $\gamma\in A_{\sigma\tau}$ if and only if
$\Inv(\gamma)\in A_{\tau\sigma}$. If there are no additional binary
predicates, $\Gamma$ contains just the empty assignment.

\paragraph*{Paths and exact successors.}
A directed graph is \emph{functional} if each vertex has at most one
outgoing edge, and \emph{inverse-functional} if each has at most one
incoming edge. Throughout, \emph{acyclic} means having no directed
cycle, and components are taken after forgetting edge directions.
A functional, inverse-functional, acyclic graph consists of paths with
order types finite, $\omega$, $\omega^*$, or $\ZZ$. Such a graph need
not be an exact successor graph: no linear order on two elements has
both vertices isolated.

\begin{lemma}[Dense completion]\label{lem:dense}
A countable functional and inverse-functional directed graph with no
directed cycles and infinitely many components is the exact successor
graph of a linear order.
\end{lemma}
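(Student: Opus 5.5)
Each component of the graph is a path of order type finite, $\omega$, $\omega^*$ or $\ZZ$, as noted just before the lemma. The plan is to order every component internally along its edges and then arrange the components themselves in a dense order with no endpoints. Concretely, let $C$ be the (countably infinite) set of components. Fix a bijection $f:C\to\QQ$, which exists because $C$ is countably infinite and $\QQ$ is countable. Inside a component $P$, let $<_P$ be the path order, meaning $a<_P b$ iff there is a directed path from $a$ to $b$. This is a strict linear order on $P$ because the graph is functional, inverse-functional and acyclic, and $P$ is connected, so $P$ is a path. Define $a<b$ iff either $a,b$ lie in the same component $P$ and $a<_P b$, or they lie in components $P\neq Q$ with $f(P)<f(Q)$. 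This is the lexicographic sum of the $(P,<_P)$ indexed by $(\QQ,<)$, and it is a strict linear order.

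The key step is to check that the immediate-successor relation of $<$ is exactly the edge relation $E$.

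First, suppose $(a,b)\in E$. Then $a$ and $b$ lie in the same component and $a<_P b$. Any $c$ with $a<c<b$ must lie in $P$ as well, because if $c$ were in another component $Q$ we would need $f(P)<f(Q)<f(P)$, which is impossible. Inside $P$ there is nothing strictly between $a$ and $b$, since $b$ is the unique out-neighbour of $a$ along the path. So $b$ is the immediate successor of $a$.

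Conversely, suppose $b$ is the immediate successor of $a$. If $a$ and $b$ lay in different components $P$ and $Q$, then $f(P)<f(Q)$. By density of $\QQ$ there is a rational $q$ with $f(P)<q<f(Q)$, and $q=f(R)$ for some component $R$, because $f$ is onto $\QQ$. Any element of $R$ then lies strictly between $a$ and $b$, a contradiction. So $a,b\in P$, and $a<_P b$ with nothing of $P$ between them. In a path order this means $b$ is reached from $a$ by a single edge, so $(a,b)\in E$.

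The only point that needs care is justifying that the path order on a component is linear and that its immediate successors are exactly the edges. This follows because a connected, functional, inverse-functional, acyclic graph is a path of one of the four listed types, whose vertices can be indexed by an interval of $\ZZ$ with edges $n\to n+1$. The real content of the lemma lies elsewhere: it uses the density of $\QQ$ and surjectivity of $f$ to prevent spurious successors between components, and that is exactly why infinitely many components are needed.
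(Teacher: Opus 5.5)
Your proposal is correct and is essentially the paper's proof: you index the components bijectively by $\QQ$, order lexicographically by component index and then by position along the path, and use density of $\QQ$ together with surjectivity of the indexing to exclude successor pairs between different components. You simply write out in more detail the two inclusions that the paper states in one line each.
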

\begin{proof}
Index the components bijectively by $\QQ$. Order elements first by
their component indices and then by their positions along the path. Consecutive elements of one
component are exactly its edges. Two elements from different components
have an intervening component, so they cannot be consecutive.
\end{proof}

\section{Star types of bounded multiplicity}\label{sec:star-types}
We first identify the part of a model that our certificate will store
element by element. The test uses star types, defined below, which
record an element’s 1-type and complete successor neighbourhood. Star
types with the same source 1-type may have different neighbourhoods
and therefore different multiplicity bounds.

\begin{definition}
A \emph{ray} $(\tau,M,\gamma)$ describes one neighbour: its
1-type $\tau$, a nonempty legal mask $M$, and the binary assignment
$\gamma$ from the source to that neighbour. A \emph{star type} $p$
consists of a source 1-type $\tp(p)$ and a set $\Rays(p)$ of rays
whose masks are pairwise disjoint. Distinct rays stand for distinct
neighbours. A star type is \emph{legal} if
$\alpha[\tp(p);=]$ holds and
$\alpha[\tp(p),\tau,M,\gamma]$ holds for every
$(\tau,M,\gamma)\in\Rays(p)$.
\end{definition}

An element realizes $p$ if $p$ describes its 1-type and all its
neighbours exactly. The \emph{multiplicity} of $p$ in a model is the
number of elements realizing $p$. There are at most $d$ rays, since
each uses at least one of the $d=2k$ ports and their masks are disjoint.
Put $v=u+a$. Encode a star type by its $v$ source bits and $d$ ray
slots, ordering the rays by their first port. Each slot stores an
occupancy bit, a mask ($d$ bits), a neighbour 1-type ($v$ bits), and
a binary assignment ($2a$ bits). Hence the number $H$ of legal star
types satisfies
\begin{equation}
 H\leq 2^{\,v+d(1+d+v+2a)}.
 \label{eq:star-types}
\end{equation}
Counting all bit strings includes invalid descriptions and only enlarges
the bound. Its exponent is polynomial in $n$, so $H\leq2^{n^{O(1)}}$.

Guess a nonempty set $P$ of legal star types, intended to be precisely the
realized star types, and write $T(X)=\{\tp(p):p\in X\}$ for $X\subseteq P$.
The inverse of a ray $e=(\tau,M,\gamma)$ at a source of star type
$p$ is $(\tp(p),M^{-1},\Inv(\gamma))$. The star types that
\emph{match} $e$ form the set
\begin{equation}
 \Match_P(p,e)=\{q\in P:\tp(q)=\tau,\
   (\tp(p),M^{-1},\Inv(\gamma))\in\Rays(q)\}.
 \label{eq:match}
\end{equation}
Reject immediately if a matching set is empty.

There are two ways to force a star type to have bounded multiplicity.
First, if its source 1-type admits no far assignment with a realized
1-type $\tau$, each realization is either equal to or a neighbour of
one fixed element of 1-type $\tau$.
Second, if a required neighbour can only have a bounded star type,
injectivity of the relevant successor port bounds the number of sources
as well. These two observations give the closure
\begin{align}
 B_0&=\{p\in P:\exists\tau\in T(P),\ A_{\tp(p)\tau}=\varnothing\},
 \notag\\
 B_{r+1}&=B_r\cup\{p\in P:\exists e\in\Rays(p),\
                         \Match_P(p,e)\subseteq B_r\}.
 \label{eq:closure}
\end{align}
The sequence stabilizes after at most $|P|$ strict increases, each of
which adds a previously absent star type.
Let $B$ be its fixed point and $U=P\setminus B$. Call these star types
\emph{bounded} and \emph{regular}, respectively. ``Regular'' means that
the star type survives this closure; it need not occur infinitely often
in a given model. An element is called bounded or regular according to
its star type.

\begin{lemma}[Multiplicity bounds]\label{lem:bounds}
For each $p\in B$ one can compute a bound $K_p$ on its multiplicity in
every model realizing exactly $P$, with
\begin{equation}
 \sum_{p\in B}K_p\leq(d+1)2^{|P|}.
 \label{eq:sum}
\end{equation}
For each $p\in U$, all sets $A_{\tp(p)\tau}$ with $\tau\in T(P)$
are nonempty, and every ray of $p$ has a match in $U$.
\end{lemma}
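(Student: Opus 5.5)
The plan is to define the bounds $K_p$ by induction along the closure~\eqref{eq:closure}, following the stage at which each star type enters $B$. To make the sum estimate easy, I will fix an enumeration $p_1,\ldots,p_t$ of $B$ compatible with the stages. The elements of $B_0$ come first, in any order. After them come those of $B_1\setminus B_0$, then those of $B_2\setminus B_1$, and so on. Write $b_0=|B_0|$. I will set $K_{p_j}=d+1$ for $j\le b_0$, and $K_{p_j}=\sum_{i<j}K_{p_i}$ for $j>b_0$.

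\emph{Base stage.} Let $p\in B_0$, witnessed by $\tau\in T(P)$ with $A_{\tp(p)\tau}=\varnothing$. In a model realizing exactly $P$, the type $\tau$ is realized, so fix an element $c$ of 1-type $\tau$. Take any realization $a$ of $p$ with $a\neq c$. If the pair $(a,c)$ were far, its assignment would lie in $A_{\tp(p)\tau}$, which is empty. Hence $a$ is a neighbour of $c$. Since $c$ has at most $d$ neighbours, $p$ has at most $d+1$ realizations.

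\emph{Inductive stage.} Let $p$ enter at stage $r+1$ via a ray $e=(\tau,M,\gamma)$ with $\Match_P(p,e)\subseteq B_r$. Pick one port in $M$, say $i^+$ (the case $i^-$ is symmetric). Map each realization $a$ of $p$ to its $S_i$-successor $f(a)$. This successor is the neighbour described by $e$, and it realizes a star type in $\Match_P(p,e)\subseteq B_r$. The map $f$ is injective, because $S_i$ is inverse-functional. So the multiplicity of $p$ is at most $\sum_{q\in B_r}K_q$. In the chosen enumeration every $q\in B_r$ precedes $p$, so this sum is at most the value assigned to $K_p$. The bounds are computable from $P$ alone.

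\emph{Sum estimate.} Let $S_j=\sum_{i\le j}K_{p_i}$. Then $S_{b_0}=(d+1)b_0$, and $S_j=2S_{j-1}$ for $j>b_0$. Hence
\[
 S_t=(d+1)\,b_0\,2^{t-b_0}\le(d+1)2^{t}\le(d+1)2^{|P|},
\]
using $b_0\le 2^{b_0}$. If $b_0=0$, then no stage ever adds anything, so $B=\varnothing$ and the bound holds trivially.

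\emph{Regular types.} Let $p\in U$. Then $p\notin B_0$, so every $A_{\tp(p)\tau}$ with $\tau\in T(P)$ is nonempty. Now take any ray $e$ of $p$. The set $\Match_P(p,e)$ is nonempty, since we rejected otherwise. If it were contained in $B$, then $p$ would be added at the next stage, contradicting that $B$ is a fixed point. So some match of $e$ lies in $U$.

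I expect the only delicate point to be the injectivity step. It requires choosing a single port of the ray's mask, so that the neighbour is determined by a function whose inverse is also a function.
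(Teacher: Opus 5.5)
Your proof is correct and follows the paper's argument essentially step by step. It uses the same four ingredients: the seed bound $d+1$ from a fixed element of the forbidden 1-type, injectivity of the map to the $e$-neighbour via a single port of its mask, a doubling estimate along the closure order, and the fixed-point argument for $U$. The only difference is bookkeeping: you set $K_{p_j}$ to the full prefix sum $\sum_{i<j}K_{p_i}$ rather than to $\sum_{q\in\Match_P(p,e)}K_q$, a harmless coarsening that makes the doubling exact, and your separate treatment of $b_0=0$ is unnecessary because your closed formula already gives $0$ there.
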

\begin{proof}
For $p\in B_0$, choose a realized 1-type $\tau$ with
$A_{\tp(p)\tau}=\varnothing$, and fix one element of 1-type
$\tau$. Every element of 1-type
$\tp(p)$ equals this element or is its neighbour. Set $K_p=d+1$.

When a new star type $p$ enters the closure, choose a ray $e$ all of
whose matches were added earlier, and set
$K_p=\sum_{q\in\Match_P(p,e)}K_q$.
Map each realization of $p$ to its $e$-neighbour. This map is injective:
choose any port in the nonempty mask of $e$; two sources with one target
would occupy the same inverse port there. The stated bound is therefore
valid.

Process the bounded types in closure order, maintaining $d+1$ plus
the sum of bounds assigned so far. A seed contributes $d+1$; any later
contribution is a sum of earlier bounds. Neither exceeds the current
total, so each step at most doubles it. After at most $|P|$ steps the
total is at most $(d+1)2^{|P|}$ and dominates the sum
in~\eqref{eq:sum}, also when $B=\varnothing$.
The assertions about $U$ follow from the fixed point.
\end{proof}

\begin{example}\label{ex:closure-small}
Suppose $\alpha$ forbids two distinct $K$-elements and $K$ is
realized. Every star type whose source 1-type contains $K$ is bounded: two
distinct elements of that 1-type admit no far assignment.
A star type requiring a $K$-element as a predecessor is bounded next. Another star type
with the same source 1-type, but allowing a regular predecessor,
can remain regular. The closure must therefore distinguish star types
with the same 1-type. Appendix~\ref{app:example} gives a four-star-type
example tracing this distinction through the full construction.
\end{example}

If $U=\varnothing$, Lemma~\ref{lem:bounds} already bounds the whole
domain, and a finite model can be guessed explicitly. For the rest of
the construction assume $U\ne\varnothing$.

\section{The finite certificate}\label{sec:certificate}
The certificate stores a finite interface and describes other elements
by \emph{rows}: star types with assignments to the bounded elements.
Section~\ref{sec:realization} constructs a model from these data and
then proves that every model supplies them.

\subsection{The finite interface}
Guess a finite set $C$ with labels $\lambda:C\to P$. Its
\emph{core} $C_B=\{c\in C:\lambda(c)\in B\}$ will contain all
bounded elements. The \emph{interface} $C$ consists of this
core and its neighbours; $C\setminus C_B$ is the \emph{boundary}.
There are at most $\sum_{p\in B}K_p$ core elements. Each contributes
itself and at most $d$ neighbours, so Lemma~\ref{lem:bounds} and
$|P|\leq H$ give
\begin{equation}
 |C|\leq(d+1)\sum_{p\in B}K_p\leq(d+1)^2 2^H.
 \label{eq:court}
\end{equation}
Guess an \emph{interface pair table} giving a mask and a binary assignment on
each unordered pair of distinct elements of $C$, in one orientation.
Reverse data are obtained by inversion; diagonals are fixed by the labels.
Require:
\begin{enumerate}[({I}1)]
\item Each $p\in B$ labels between one and $K_p$ elements, and every
boundary element has a neighbour in the core.
\item At each $c\in C$, the neighbours specified by the pair table realize distinct rays
of $\lambda(c)$. At each $c\in C_B$, they realize all its rays.
\item Every pair satisfies $\alpha$, and each colour graph on $C$ is
acyclic.
\end{enumerate}
In (I2), an entry $(M,\gamma)$ from $c$ to $c'$ with
$M\ne\varnothing$ realizes the ray
$(\tp(\lambda(c')),M,\gamma)$ at $c$ and its inverse at $c'$.
Thus the check fixes both endpoints and forbids reuse of a ray.

For an element $a$ assigned star type $p$, a \emph{ray occurrence}
is a pair $(a,e)$ with $e\in\Rays(p)$. It is \emph{free} if no
neighbour has yet been assigned to it. Every free occurrence in $C$ lies on the
boundary and has a match in $U$, by Lemma~\ref{lem:bounds}.
No new element will have a neighbour in the core. If $B=\varnothing$,
then $C=\varnothing$.
The table records the restriction of the successor relations to $C$.
It does not take successors afresh in a restricted order: deleting
intervening elements must not create new edges.

\subsection{Rows and witnesses}\label{sec:rows}
A \emph{row} describes a new regular element \(z\), far from the whole
core. Formally, it is a pair \((q,f)\), where \(q\in U\) is the star
type of \(z\) and \(f:C_B\to\Gamma\) satisfies \(f(c)\in
A_{\mathrm{tp}(q),\mathrm{tp}(\lambda(c))}\) for every \(c\in C_B\).
Thus one row fixes the binary assignment to every core element.  The
orientation is from \(z\) to \(c\): \(R\in f(c)\) means \(R(z,c)\),
and \(R^{-1}\in f(c)\) means \(R(c,z)\).

A row is \emph{feasible} if, for each witness obligation $j$, at least one
of the following choices makes $\beta_j$ true:
\begin{enumerate}[(W1)]
\item the source itself: $\beta_j[\tp(q);=]$;
\item a neighbour described by $(\tau,M,\gamma)\in\Rays(q)$:
$\beta_j[\tp(q),\tau,M,\gamma]$;
\item a far regular element: some $\tau\in T(U)$ and
$\gamma\in A_{\tp(q)\tau}$ satisfy
$\beta_j[\tp(q),\tau,\varnothing,\gamma]$;
\item a core element $c$:
$\beta_j[\tp(q),\tp(\lambda(c)),\varnothing,f(c)]$.
\end{enumerate}
The same map $f$ must work for all obligations, although their witnesses
may differ. Guess a finite set $\mathcal F$ of feasible rows, containing
at least one row for every $q\in U$, with
\begin{equation}
 |\mathcal F|\leq |U|+m|C|.
 \label{eq:rows}
\end{equation}
The bound concerns the selected rows, not all feasible rows.  It
allows one row for each regular star type and at most one additional
row per each of the $m$ interface obligations;
Lemma~\ref{lem:extraction} proves that these suffice.  Different rows
with the same star type may be necessary: if \(E\) is asymmetric and
\(c\) lies in the core, one copy \(z\) may require \(E(z,c)\), while
another must witness \(E(c,z)\).  These requirements cannot share a
row.  Every selected row receives infinitely many copies.
Appendix~\ref{app:example} gives an explicit example.

For each $c\in C$ and obligation $j$, choose a witness in one of
three ways:
\begin{enumerate}[(a)]
\item an element of $C$, possibly $c$ itself;
\item the neighbour to be assigned to a free ray occurrence at $c$;
\item a new far element described by a row $(q,f)\in\mathcal F$,
with an assignment $\gamma\in A_{\tp(\lambda(c)),\tp(q)}$ from $c$
to that element.
\end{enumerate}
The chosen diagonal, pair, or ray must satisfy $\beta_j$. In (c), this
means $\beta_j[\tp(\lambda(c)),\tp(q),\varnothing,\gamma]$; if
$c\in C_B$, also require
\begin{equation}
 \gamma=\Inv(f(c)).
 \label{eq:pin}
\end{equation}
The equality requires the two descriptions of the pair to agree.
For a boundary source there is no row entry to constrain $\gamma$.

\begin{samepage}
After $P$ has been chosen, the certificate therefore has three parts:
\begin{center}
\renewcommand{\arraystretch}{1.15}
\begin{tabular}{@{}p{.27\textwidth}p{.65\textwidth}@{}}
\toprule
Data & What they specify \\
\midrule
Interface and pair table & The bounded elements, their
neighbours, and all pairs between them. \\
Rows $\mathcal F$ & The allowed descriptions of new elements:
a star type and one consistent assignment to the whole core. \\
Witness choices & For every interface obligation, an element of $C$,
a free ray occurrence, or a row and a compatible far assignment. \\
\bottomrule
\end{tabular}
\end{center}
\end{samepage}
A certificate is \emph{accepting} if all these conditions hold.

\section{Certificates and models}\label{sec:realization}
Sections~\ref{sec:heights}–\ref{sec:far} construct an admissible
model from an accepting certificate; Section~\ref{sec:extracting}
proves the converse by extracting a certificate from an admissible
model. For the construction, fix an accepting certificate with
\(U\ne\varnothing\). Give every row infinitely many copies. We first
match their successor requirements, then fill the far pairs to satisfy
the remaining obligations.

\subsection{Matching rays at integer heights}\label{sec:heights}
Choose height vectors $\mathbf h(c)=(h_1(c),\ldots,h_k(c))$ on $C$
so that $h_i$ increases by one along each $S_i$-edge. This is possible
because each colour graph is a disjoint union of paths. Add
\begin{equation}
 R=\mathcal F\times\ZZ^k\times\NN\times\NN,
 \label{eq:regular}
\end{equation}
where $\NN=\{0,1,2,\ldots\}$. An element \((r,\mathbf{z},g,\ell)\)
carries row \(r=(q,f)\) and has star type \(q\). The vector
\(\mathbf{z}\) gives its heights and will rule out cycles. Elements
with the same \(g\) form one group: successor edges inside \(R\) stay
in that group, while far witnesses are chosen in group
\(g+1\). Finally, \(\ell\) supplies infinitely many copies of each row
at every height and group.

For a legal mask $M$, define
\[
 \delta(M)_i=
 \begin{cases}
 1,&i^+\in M,\\
 -1,&i^-\in M,\\
 0,&\text{otherwise}.
 \end{cases}
\]
A ray with mask $M$ will lead from height $\mathbf z$ to
$\mathbf z+\delta(M)$. For example, $\{1^+,2^-\}$ leads to one
shared neighbour at displacement $(1,-1)$, whereas separate rays
$\{1^+\}$ and $\{2^-\}$ lead to different heights.
The map $M\mapsto\delta(M)$ is injective,
$\delta(M^{-1})=-\delta(M)$, and nonempty masks have nonzero displacement.

\paragraph*{Attach the interface.}
For each free occurrence $(c,e)$ with mask $M$, choose a row whose
star type matches $e$, and a distinct copy at height
$\mathbf h(c)+\delta(M)$ in group $0$. Match $e$ to its inverse
there, using its binary assignment. This uses just one occurrence at
the copy. Every such attachment starts at
the boundary, since all core rays are already realized.

For each interface witness choice of kind (c), reserve a separate copy
of its chosen row in group \(0\), distinct from all attachment
targets. It receives no successor edge to \(C\), so the chosen pair is
far. Separate copies also prevent different obligations from
prescribing incompatible assignments to the same pair. There are at
most \(d|C|\) attachment targets and \(m|C|\) reserved witnesses, so
the coordinate \(\ell\) supplies enough copies. All of \(R\) remains
far from the core.

\begin{lemma}[Ray realization]\label{lem:matching}
All remaining free ray occurrences in $R$ can be matched so that each assigned
star type is realized exactly, every colour is acyclic, and edges
inside $R$ preserve $g$.
\end{lemma}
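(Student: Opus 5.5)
The plan is to match rays level by level, using the displacement vectors $\delta(M)$ to fix where each neighbour lives, and to pair free occurrences inside each height-and-group cell through the coordinate $\ell$. First I fix the invariant that every matched edge realizes a ray with mask $M$ from an element at height $\mathbf z$ to an element at height $\mathbf z+\delta(M)$, and stays within one group $g$. The interface attachments already satisfy this: the boundary element $c$ sits at $\mathbf h(c)$, and its target copy sits at $\mathbf h(c)+\delta(M)$ in group $0$. With this invariant, acyclicity follows at once. Every $S_i$-edge, and every $S_i$-edge inside $C$, strictly increases the $i$th height coordinate by one, because $\delta(M)_i=1$ whenever $i^+\in M$. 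So a directed cycle in colour $i$ is impossible, and this holds both on $C\cup R$ and across the attachments.

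The main step is to match, within a fixed group $g$, every free occurrence $(x,e)$ with $x\in R$. Here $x=(r,\mathbf z,g,\ell)$ has star type $q$ and $e=(\tau,M,\gamma)$. By Lemma~\ref{lem:bounds}, $e$ has a match $q'\in U$, and $\mathcal F$ contains a row $r'$ with star type $q'$. The partner will be some copy $(r',\mathbf z+\delta(M),g,\ell')$, using its inverse occurrence $(\tp(q),M^{-1},\Inv(\gamma))$.

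To make the matching a perfect bijection of occurrences, I would enumerate all free occurrences, which form a countable set, and use a back-and-forth or greedy scheme. At each stage I pick the least unmatched occurrence. For its partner I take a copy with a fresh $\ell'$, one that has not been used at all so far. Such a copy exists because only finitely many copies are touched at any finite stage, while each row has infinitely many copies at each height and group. A fresh copy has all its ray occurrences free, so I can assign the inverse ray there. After $\omega$ stages every occurrence is matched, since the least unmatched one is always handled next.

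Each element's rays are matched to distinct partners. Two rays of one element have disjoint masks, so their $\delta$'s differ by injectivity, which puts the partners at different heights; alternatively, they are distinct simply because both partners are fresh. The partner's own remaining occurrences are free and get handled later. Since each port is used by exactly one ray, functionality and inverse-functionality follow. The realized star type is exactly $q$, because every ray of $q$ receives precisely one neighbour with the prescribed 1-type, mask and assignment, and there are no extra edges. The binary assignment on each edge is fixed once, from both sides consistently, because the partner's ray is the inverse by definition of $\Match_P$.

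The main obstacle I expect is the interaction with copies already committed before this step: the attachment targets and the reserved witness copies in group $0$. An attachment target has exactly one occurrence used, and its other occurrences are free. These free occurrences are simply included in the enumeration, and the argument does not care. Reserved witness copies receive edges only within $R$, which keeps them far from $C$ as required. A second subtle point is that an edge's mask must not accidentally gain further colours. This is avoided because each pair receives at most one ray: every edge goes to a previously untouched copy, so no pair is linked twice.
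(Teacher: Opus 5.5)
Your proof is correct and follows the paper's strategy. Every edge joins a copy at height $\mathbf z$ to one at $\mathbf z+\delta(M)$ in the same group, the $i$th height coordinate excludes $S_i$-cycles, and injectivity of $\delta$ keeps the neighbours of one element distinct. The only difference is how you build the matching: you pair the least unmatched occurrence greedily with an inverse occurrence at a completely fresh copy, while the paper fixes, for each representative 2-type and each cell $(\mathbf z,g)$, a bijection between its free occurrences and the inverse occurrences at $(\mathbf z+\delta(M),g)$. Freshness therefore rules out repeated pairs directly, and even makes the neighbour graph inside $R$ a forest, whereas the paper gets this from injectivity of $\delta$.
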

\begin{proof}
An occurrence of $(\tau,M,\gamma)$ at a source of 1-type $\sigma$
prescribes 2-type $\theta=(\sigma,\tau,M,\gamma)$. Its inverse is
$\theta^{-1}=(\tau,\sigma,M^{-1},\Inv(\gamma))$.
They are distinct because $M$ is nonempty and legal. Choose one
representative from each inverse pair of 2-types occurring in regular
star types.

For each representative $\theta$ with mask $M$ and each
$(\mathbf z,g)$, match its free occurrences at $(\mathbf z,g)$
bijectively with inverse occurrences at $(\mathbf z+\delta(M),g)$,
and give each matched pair its specified 2-type.
Both sets are countably infinite. Regularity ensures that both
orientations occur in $U$, and $\mathcal F$ contains a row for every
star type in $U$. Each such row has infinitely many copies at every
fixed height and group, and attachments consume only finitely many
occurrences.
Every occurrence belongs to exactly one such matching: one of inverse
type at height $\mathbf w$ belongs to the matching at
$\mathbf w-\delta(M)$.

No pair is a loop, since its displacement is nonzero. If two
occurrences at one element reached the same neighbour, their masks
would have the same displacement, hence would be equal. Distinct rays
cannot have equal nonempty masks, and each occurrence is used only
once. Thus no pair is assigned twice, from either endpoint. Distinct
attachment targets give the same property for pairs involving
$C$. Since each port belongs to at most one ray and each occurrence is
used only once, every colour graph is functional and
inverse-functional.

Put $H_i(c)=h_i(c)$ and $H_i(r,\mathbf z,g,\ell)=z_i$.
Every $S_i$-edge, including an attachment, increases $H_i$ by one,
so no $S_i$-cycle exists. All matchings inside $R$ preserve $g$.
\end{proof}

Every group $g\geq1$ is nonempty and has no successor edge leaving it.
Hence each colour has infinitely many components.
Lemma~\ref{lem:dense} supplies an order with exactly that successor
relation, independently for each colour.

\subsection{Assigning far pairs}\label{sec:far}
For every new element with row $(q,f)$, assign $f(c)$ on its pair
to each $c\in C_B$. Also assign each reserved interface-witness pair
as specified by its witness choice. The two instructions agree at a
core source by~\eqref{eq:pin}; at a boundary source they concern
separate pairs. Distinct reserved witnesses prevent further conflicts.
Fill the other far pairs between $C$ and $R$ from their permitted
assignment sets. These are nonempty because the endpoint in $R$ is
regular, even when the endpoint in $C$ is bounded.

All interface obligations are now satisfied. For new elements, any
obligation admitting (W1), (W2), or (W4) already has its diagonal,
ray, or core witness. By row feasibility, every remaining obligation
admits (W3), the case of a far regular witness (Section~\ref{sec:rows}).

\begin{lemma}[Far witnesses]\label{lem:witnesses}
The remaining obligations can be satisfied without changing any assigned
pair or violating $\alpha$.
\end{lemma}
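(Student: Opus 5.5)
For each new element $x=(r,\mathbf z,g,\ell)$ with row $r=(q,f)$ and each obligation $j$ still unmet at $x$, I will use (W3) to obtain a regular 1-type $\tau\in T(U)$ and an assignment $\gamma\in A_{\tp(q)\tau}$ with $\beta_j[\tp(q),\tau,\varnothing,\gamma]$. Choose some $q'\in U$ with $\tp(q')=\tau$, and a row $r'=(q',f')\in\mathcal F$, which exists because $\mathcal F$ has a row for every regular star type. The witness $y$ will be a copy of $r'$ in group $g+1$, taken at an arbitrary height and an index $\ell$ not used before. On the pair $(x,y)$ I assign $\gamma$, and from $y$ to $x$ the assignment $\Inv(\gamma)$.

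\emph{The chosen pair is far and still unassigned.} Every successor edge inside $R$ preserves $g$, and no element of $R$ is adjacent to the core, so $x$ and $y$ are not neighbours. Hence the mask is $\varnothing$, as (W3) requires. A pair between distinct groups $g$ and $g+1$ has not yet received a binary assignment: the previous steps assigned only ray pairs (inside one group), pairs between $R$ and $C$, and reserved interface pairs (which also involve $C$).

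\emph{No conflicts arise.} Two potential conflicts remain. First, $y$ might also request $x$ as a witness. That cannot happen, because $y$ selects its own witnesses in group $g+2\neq g$. Second, two obligations, possibly at different sources, might select the same $y$. To rule this out I give every pair (source, obligation) its own fresh copy. This is possible because group $g+1$ contains infinitely many copies of $r'$ at every height, via the coordinate $\ell$, while only countably many requests are made. A countable enumeration of all requests, each served by the least unused index, makes this precise. Every pair inside $R$ therefore receives at most one witness assignment, consistently oriented.

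\emph{Filling the rest and checking $\alpha$.} Every remaining far pair inside $R$ gets an arbitrary element of $A_{\sigma\tau}$. This set is nonempty by Lemma~\ref{lem:bounds}, since both endpoints are regular, and the choice is made in one orientation with the inverse used for the other. Now check $\alpha$ on every pair. Witness pairs satisfy it because $\gamma\in A_{\tp(q)\tau}$ and their mask is empty. Filled pairs satisfy it by the definition of $A$. Ray pairs satisfy it by legality of the star types, and interface pairs by (I3). Earlier assignments are never changed.

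\emph{Main obstacle.} The delicate point is the bookkeeping showing that witness pairs never coincide with already-assigned pairs or with each other. The disjointness of groups together with fresh indices handles this, and I would state it explicitly as a disjointness claim before concluding.
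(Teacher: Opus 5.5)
Your proposal is correct and follows the paper's argument. You choose (W3) targets in group $g+1$, and such pairs are far and still unassigned because edges in $R$ preserve groups. No pair is selected from both ends, since selections always raise the group number. The remaining far pairs in $R$ are then filled from their permitted sets, which are nonempty by regularity.

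The one difference is that you make targets fresh for every (source, obligation) pair. The paper only needs distinct targets for different obligations of the same source, because different sources that share a target use different pairs. Your stronger choice is harmless, since group $g+1$ has countably infinitely many copies of each row.
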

\begin{proof}
For each remaining obligation of an element in group $g$, choose a
target 1-type and a permitted assignment as in (W3), and a target of that
type in group $g+1$. Use distinct targets for different obligations
of the same source. Every required 1-type has infinitely many realizations
there, and every selected pair is far because successor edges preserve
groups.

No unordered pair is selected twice. Two different sources could do so
only by selecting each other, whereas selections always increase the
group number. Sharing a target uses different pairs and causes no
conflict. Assign the selected pairs, then fill every remaining far pair
in $R$ from its set of permitted assignments. These sets are nonempty by
regularity. Choose one orientation per pair; symmetry of $\alpha$
validates the reverse assignment.
\end{proof}

\subsection{From a model to a certificate}\label{sec:extracting}
\begin{lemma}[Extraction]\label{lem:extraction}
Every admissible model realizing exactly $P$, with $U\ne\varnothing$,
supplies an accepting certificate.
\end{lemma}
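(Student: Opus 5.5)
Let $\mathfrak A$ be an admissible model realizing exactly $P$ with $U\ne\varnothing$. The plan is to read every certificate component directly off $\mathfrak A$. The only nontrivial part is choosing the rows $\mathcal F$ so that they are feasible and respect the bound~\eqref{eq:rows}.

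\textbf{Interface.} Let $C_B$ be the set of elements whose star type lies in $B$, and let $C$ be $C_B$ together with all neighbours of its elements. Label each element by its realized star type. Lemma~\ref{lem:bounds} gives at most $K_p$ realizations of each $p\in B$, and $p\in P$ gives at least one, so (I1) holds. The pair table is the restriction of masks and binary assignments of $\mathfrak A$ to $C$. Then (I2) holds because labels are the actual star types and core elements keep all their neighbours in $C$. Condition (I3) holds since $\mathfrak A\models\alpha$ and each $S_i$ is the successor relation of a linear order, so it contains no directed cycles.

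The remaining free occurrences lie on the boundary. Their actual neighbours in $\mathfrak A$ lie outside $C$, and such a neighbour has a regular star type: a neighbour with a bounded type would be in $C_B$, and then the boundary element's neighbour would lie in $C$.

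\textbf{Rows.} For any element $z\notin C$, its star type $q$ is regular. It is far from every $c\in C_B$, since otherwise $z\in C$. Define $f_z(c)$ as the actual assignment from $z$ to $c$. This lies in $A_{\tp(q),\tp(\lambda(c))}$ because $\alpha$ holds. We check that $(q,f_z)$ is feasible, case by case on the actual witness $w$ for $\beta_j$ at $z$:
\begin{itemize}
\item $w=z$ gives (W1);
\item $w$ a neighbour of $z$ gives (W2);
\item $w\in C_B$ gives (W4), since $w$ is far from $z$;
\item otherwise $w$ is far from $z$ and not bounded, so $\tp(w)\in T(U)$ and the actual assignment lies in $A_{\tp(q),\tp(w)}$, which gives (W3).
\end{itemize}
This last case covers far boundary elements, which are regular by the previous paragraph.

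\textbf{Selecting $\mathcal F$ and witness choices.} For each $q\in U$ we need some realization outside $C$. This is the main obstacle: every realization of $q$ might lie in the boundary $C\setminus C_B$. In that case we cannot produce a row for $q$ from an actual element, and the requirement that $\mathcal F$ contain a row for every $q\in U$ would need another source.

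I would handle this as follows. A boundary element $b$ of type $q$ is still far from every core element other than its core neighbours. I would try to build $f$ on those neighbours using $A_{\tp(q),\cdot}\ne\varnothing$ (Lemma~\ref{lem:bounds}), with feasibility inherited from $b$'s witnesses. A witness at a core neighbour of $b$ is the difficulty, because the original pair had a nonempty mask. Such a witness is already covered by (W2) through the corresponding ray, so $f$ at that neighbour is unconstrained. Core elements far from $b$ keep their actual assignment.

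Hence define $f_b(c)$ as the actual assignment when $c$ is far from $b$, and an arbitrary element of the permitted set otherwise. The feasibility argument above then goes through unchanged. This gives one row per $q\in U$.

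For interface witness choices, take the actual witness $w$ of $c\in C$ for obligation $j$:
\begin{itemize}
\item if $w\in C$, choose (a);
\item if $w$ is a neighbour of $c$ outside $C$, it fills a free ray occurrence, so choose (b);
\item otherwise $w\notin C$ is far from $c$, so choose (c) with row $(\tp\text{-star}(w),f_w)$ and $\gamma$ equal to the actual assignment from $c$ to $w$.
\end{itemize}
In case (c) with $c\in C_B$, equation~\eqref{eq:pin} holds by the definition of $f_w$. These choices add at most $m|C|$ rows, which gives~\eqref{eq:rows}. All conditions are now verified, so the certificate is accepting.
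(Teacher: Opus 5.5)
Your proposal is correct and follows essentially the same route as the paper's proof. You read the interface and pair table off the model, and you form one row per regular star type from any realization, possibly on the boundary, by copying actual assignments to far core elements and choosing any permitted assignment at core neighbours, whose witnesses are already covered by (W2); you then add the actual row of each far interface witness to justify choice (c) and~\eqref{eq:pin}, with the same feasibility case analysis and the same count $|U|+m|C|$.
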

\begin{proof}
Take all bounded elements as the core, add their neighbours
to obtain $C$, and retain the actual pair table. Lemma~\ref{lem:bounds}
and the successor semantics give (I1)--(I3) and~\eqref{eq:court}.

For each \(q\in U\), choose a realization \(z\), possibly on the
boundary, and form a row \((q,f)\) describing a new element far from
the core. Copy the assignments of \(z\) to far core elements. At a
core neighbour, use any permitted far assignment, which exists by
regularity. A neighbour witness is still represented by its ray in
\(q\), and that ray has a matching regular star type by
Lemma~\ref{lem:bounds}. Thus actual witnesses give (W1) on the
diagonal, (W2) at neighbours, (W4) at far core elements, and (W3)
otherwise, proving feasibility.

Choose actual witnesses for the interface obligations. A witness in
$C$ gives (a); a neighbour outside $C$ gives (b). Every remaining
witness lies outside $C$ and is far from the whole core. Add its actual
row to $\mathcal F$ and use (c). The same classification proves row
feasibility, and the actual assignments satisfy~\eqref{eq:pin}.
There is one initial row per regular type and at most one additional
row per interface obligation, proving~\eqref{eq:rows}.
\end{proof}

\begin{proposition}[Certificate equivalence]\label{lem:certificate}
Fix $P$ with nonempty matching sets and $U\ne\varnothing$.
An accepting certificate exists if and only if $\Phi$ has an admissible
model realizing exactly $P$. Every accepting certificate yields a
countably infinite such model, with every star type in $U$ realized
infinitely often.
\end{proposition}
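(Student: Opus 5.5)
The proposition assembles the results of Section~\ref{sec:realization}, so the plan is to prove the two directions separately and obtain the extra properties from the construction. The backward direction is immediate: given an admissible model of $\Phi$ realizing exactly $P$ with $U\ne\varnothing$, Lemma~\ref{lem:extraction} yields an accepting certificate. The only point to check is that its proof uses the bounds of Lemma~\ref{lem:bounds}, which hold in every model realizing exactly $P$, so the size constraints \eqref{eq:court} and \eqref{eq:rows} are satisfied.

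For the forward direction, fix an accepting certificate and build the structure on $C\cup R$ with $R=\mathcal F\times\ZZ^k\times\NN\times\NN$, which is countably infinite. The construction proceeds in the order of Sections~\ref{sec:heights}--\ref{sec:far}:
\begin{enumerate}
\item Fix heights on $C$ and attach the free boundary occurrences.
\item Reserve the interface witnesses.
\item Apply Lemma~\ref{lem:matching} to realize every remaining ray, keeping each colour functional, inverse-functional and acyclic.
\item Apply Lemma~\ref{lem:dense} colourwise to turn each $S_i$ into an exact successor relation. This uses infinitely many components, which come from the nonempty groups $g\geq1$ having no edges between them.
\item Assign far pairs via the rows and the reserved witnesses.
\item Apply Lemma~\ref{lem:witnesses}.
\end{enumerate}
The structure is then admissible by construction.

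Next comes the verification that $\Phi$ holds. For $\forall\forall\alpha$, split pairs into four cases:
\begin{itemize}
\item pairs inside $C$, covered by (I3);
\item neighbour pairs, where the 2-types come from rays of legal star types;
\item far pairs between $C$ and $R$, assigned from $f(c)$, from witness choices satisfying \eqref{eq:pin}, or from the nonempty sets $A_{\sigma\tau}$;
\item far pairs inside $R$, assigned from permitted sets.
\end{itemize}
The diagonal is covered by legality of star types. For the $\forall\exists$ conjuncts, interface elements use their witness choices (a)--(c). Choice (b) is realized by the attachment, and (c) by the reserved copy, whose pair keeps its chosen assignment. Elements of $R$ use row feasibility, with (W3) handled by Lemma~\ref{lem:witnesses}.

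The remaining task is to show that the model realizes exactly $P$, and I expect this to be the main obstacle. One must check that every element realizes precisely its intended star type. For core elements, (I2) says all rays are realized within $C$, and no new element gets a neighbour in the core. Boundary elements get their rays from the table plus the attachments. Elements of $R$ get theirs from the matching, and Lemma~\ref{lem:matching} rules out extra or merged neighbours. Surjectivity onto $P$ is the delicate part:
\begin{itemize}
\item every $p\in B$ labels some core element by (I1);
\item every $q\in U$ has a row in $\mathcal F$ and therefore infinitely many copies in $R$.
\end{itemize}
The latter also gives the claim that each star type in $U$ is realized infinitely often. Boundary labels lie in $P$, so no type outside $P$ appears.

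One further point must be confirmed: the dense completion does not alter the successor graph. Lemma~\ref{lem:dense} preserves the graph exactly, so neighbourhoods, and hence star types, are unchanged.
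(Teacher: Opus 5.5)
Your proposal is correct and follows the paper's own proof. The forward direction combines Lemmas~\ref{lem:matching}, \ref{lem:dense} and~\ref{lem:witnesses} with star-type legality, the interface checks (I1)--(I3), the rows and the witness choices to obtain admissibility, $\alpha$, all obligations, and exact realization of $P$ (with $U$ realized infinitely often through the row copies), and the converse is Lemma~\ref{lem:extraction}. Your case split for $\alpha$ and your check that no element gains extra or merged neighbours only spell out details the paper's short proof leaves implicit.
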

\begin{proof}
Lemmas~\ref{lem:matching} and~\ref{lem:dense} realize the certified
star types with exact successors. Star type legality, the interface
checks, and the permitted far assignments ensure $\alpha$ everywhere.
The witness choices and Lemma~\ref{lem:witnesses} satisfy all
obligations. The interface realizes every star type in $B$, and the rows
supply infinitely many realizations of every star type in $U$.
The converse is Lemma~\ref{lem:extraction}.
\end{proof}

Infinitely many copies allow the ray matchings, and infinitely many
components allow dense completion. The construction does not in general
yield a finite model or orders of type $\omega$.

\section{Decision procedures and consequences}\label{sec:decision}
\subsection{Satisfiability and the complexity bound}
The algorithm enumerates the legal star types and the permitted
assignment sets $A_{\sigma\tau}$, guesses
a nonempty set $P$ of realized star types, rejects empty matching sets,
and computes the boundedness closure $B$ from~\eqref{eq:closure}.
Set $U=P\setminus B$, the set of regular star types. There are two branches.
\begin{enumerate}[(1)]
\item If $U=\varnothing$, guess a nonempty structure of size at most
$\sum_{p\in P}K_p$, with the multiplicity bounds $K_p$ from
Lemma~\ref{lem:bounds}. Check $\Phi$, exact realization of $P$, and that
each distinguished successor is a single directed spanning path.
\item If $U\ne\varnothing$, guess and check an accepting certificate.
\end{enumerate}
The first branch covers all models with its star type set by
Lemma~\ref{lem:bounds}. The second is sound and complete by
Proposition~\ref{lem:certificate}. The second branch can
also accept a star type set originally extracted from a finite model;
it then constructs an infinite model with the same set of star types.

\begin{proof}[Proof of Theorem~\ref{thm:main}]
Recall that $H$ counts legal star types, $d=2k$, $a$ counts additional
binary predicates, and $m$ counts the witness obligations in~\eqref{eq:scott}.
The lists of legal star types and permitted assignments have singly
exponential size by~\eqref{eq:star-types} and $|\Gamma|=2^{2a}$.
Once $P$ is chosen, matching sets and at most $H$ closure steps are
computed in time polynomial in these lists. By~\eqref{eq:sum},
$K_p\leq(d+1)2^H$, so its binary representation has
$O(H+\log(d+1))$ bits. Thus even arithmetic on the multiplicity
bounds takes only singly exponential time.

Put $c=|C|$ and $c_B=|C_B|$. Assignments in the pair table and rows take
\[
 2a\left(\binom c2+|\mathcal F|c_B\right)\quad\text{bits}:
\]
one $2a$-bit assignment per interface pair and per row entry at a core
element. Here $c_B\leq c$ and $|\mathcal F|\leq H+mc$ by~\eqref{eq:rows}.
Masks use $d$ bits; labels and indices use binary encodings. Together
with the $mc$ witness choices, their total size is polynomial in $n,H,c$.
All checks are polynomial in the explicit certificate and the precomputed
lists; for (W3), scan the realized regular 1-types and permitted assignments.
Only selected rows are checked. By~\eqref{eq:court}, $c$ is exponential
in $H$, which is singly exponential in $n$. Hence the certificate,
and the time to guess and check it, are at most doubly exponential:
$2^{2^{n^{O(1)}}}$. The finite-model branch obeys the smaller domain
bound~\eqref{eq:sum} and the same time bound.

Hardness follows by adding an unused successor to ordinary $\FO$.
\end{proof}

\subsection{Infinite models and an effective size threshold}
\label{sec:consequences}
The branch with $U\ne\varnothing$ also decides infinite
satisfiability. The multiplicity bounds give a threshold above which a
finite model implies the existence of an infinite model.

\begin{theorem}[Infinite satisfiability]\label{thm:infinite}
Existence of an infinite admissible model belongs to $\TwoNEXP$ and is
$\NEXP$-hard already for one successor. Every sentence with an infinite
admissible model has a countably infinite admissible model supplied by
an accepting certificate with $U\ne\varnothing$.
\end{theorem}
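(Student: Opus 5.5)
The plan is to show that an admissible model is infinite exactly when some guessed star-type set $P$ with $U\ne\varnothing$ admits an accepting certificate. The decision procedure is then the second branch of the algorithm from Theorem~\ref{thm:main}, with the first branch dropped.

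\emph{Soundness.} If the procedure finds $P$ with nonempty matching sets, $U\ne\varnothing$, and an accepting certificate, then Proposition~\ref{lem:certificate} yields a countably infinite admissible model of $\Phi$. Its reduct is an infinite admissible model of the input, since normalization preserves the domain.

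\emph{Completeness.} Conversely, take an infinite admissible model of $\Phi$, obtained by expanding an infinite model of the input, and let $P$ be its set of realized star types. All matching sets are nonempty, since every ray of a realized star type is realized by an actual neighbour, whose star type contains the inverse ray. The key claim is $U\ne\varnothing$. If $U=\varnothing$, then every realized star type lies in $B$, and Lemma~\ref{lem:bounds} bounds the number of elements by $\sum_{p\in B}K_p\leq(d+1)2^{|P|}$. This contradicts infinity. With $U\ne\varnothing$, Lemma~\ref{lem:extraction} supplies an accepting certificate. This gives the final assertion of the theorem, and the certificate in turn produces a countably infinite model by Proposition~\ref{lem:certificate}.

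\emph{Upper bound.} The procedure guesses $P$, checks the matching sets, computes $B$, rejects if $U=\varnothing$, and otherwise guesses and verifies a certificate. The complexity analysis from the proof of Theorem~\ref{thm:main} applies verbatim, so the problem is in $\TwoNEXP$.

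\emph{Lower bound.} I would reduce finite-or-infinite satisfiability of ordinary $\FO$ with one successor, which is $\NEXP$-hard, to infinite satisfiability. The subtle point is how to force infiniteness without changing satisfiability. One option is to add a successor $S_1$ unused by the formula together with $\forall x\exists y\,S_1(x,y)$ on a fresh colour; but then any $\FO$ model must be enlarged to an infinite one. For that I would use the fact that $\FO$ is closed under duplication of elements: any model $\mathfrak A$ of a normal form sentence extends to an infinite model by cloning an element. Here I expect a care point. Cloning must handle the self-pair of the original and the pair between copies, so the 2-type between clones has to be chosen consistently with $\alpha$. The cleaner route is to take the $\FO$ model, which has the finite-model property, and use the standard fact that $\FO$ sentences with equality but no counting have models of all cardinalities above the original size, obtained by taking $\omega$ copies of a finite model and connecting copies using 2-types of far pairs from the original.

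Concretely, I would build the infinite model as $\mathfrak A\times\NN$. Pairs inside a copy are kept as in $\mathfrak A$, and a pair $(a,i),(b,j)$ with $i\ne j$ receives the 2-type of $(a,b)$ if $a\ne b$. When $a=b$, it receives the 2-type of $(a,b')$ for some fixed $b'\ne a$ with $\tp(b')=\tp(a)$, if one exists. This last case is the obstacle. It is avoided by first doubling, via a standard reduction from $\NEXP$-hard $\FO$ satisfiability that already guarantees every 1-type is realized at least twice, or simply by using a product with a fresh unary predicate. Finally, the added successor is interpreted as the successor of an $\omega$ order on $\mathfrak A\times\NN$, which is admissible since the formula does not mention it. This finishes the hardness proof.
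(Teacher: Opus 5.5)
Your upper bound and the final assertion follow the paper's argument. In an infinite model, the bounded star types account for at most $\sum_{p\in B}K_p$ elements, so $U\ne\varnothing$. Lemma~\ref{lem:extraction} then supplies a certificate, and Proposition~\ref{lem:certificate} with the second branch of the algorithm gives the rest.

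The gap is in the hardness part. The fact you call standard is false for $\FO$ with equality. The sentence $\varphi=\forall x\forall y\,(x=y)$ is satisfiable but has only one-element models. On one element $S_1$ must be empty, so $\varphi\land\forall x\exists y\,S_1(x,y)$ is unsatisfiable, and your map does not preserve satisfiability. More generally, $\FO$ can say that some 1-type is realized at most once, which is exactly the obstacle you flag. Neither proposed repair removes it as stated. A product with a fresh unary predicate leaves $\varphi$ in force on the whole enlarged domain, so $\varphi$ still bounds its size. The ``standard reduction'' guaranteeing that every 1-type is realized twice is never identified. Your route could be completed by citing that $\NEXP$-hardness of $\FO$ already holds for equality-free (for instance monadic) sentences. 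Their models are closed under cloning: give two clones the atomic type of the diagonal. That fact would have to be stated and justified, and your proposal does neither.

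The paper sidesteps this by relativization. Replace $\varphi$ by $\exists x\,D(x)\land\varphi^{D}$ for a fresh unary predicate $D$. Place any model of $\varphi$ on $D$ and add infinitely many elements outside $D$. The unused successor is then the successor relation of any linear order on this infinite domain. Conversely, the $D$-part of any admissible model satisfies $\varphi$. You also do not need to force infiniteness with $\forall x\exists y\,S_1(x,y)$: the target problem only asks whether some infinite model exists.
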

\begin{proof}
In an infinite model, the finitely many bounded star types account for
only finitely many elements. Its realized star type set therefore has
$U\ne\varnothing$, and Lemma~\ref{lem:extraction} supplies a
certificate. Conversely, every such certificate yields an infinite
model. Restricting the algorithm to its second branch proves the bound.

For hardness, relativize ordinary $\FO$ to a nonempty unary predicate
and add infinitely many elements outside it.
\end{proof}

\begin{corollary}[Effective threshold]\label{cor:threshold}
Let $H$ be the number of legal star types of a normalized sentence $\Phi$,
let $d=2k$, and put $F_\Phi=(d+1)2^H$. If $\Phi$ has a finite admissible model of
size greater than $F_\Phi$, it has an infinite admissible model.
Equivalently, if it has no infinite admissible model, every admissible
model has size at most $F_\Phi\leq2^{2^{n^{O(1)}}}$.
\end{corollary}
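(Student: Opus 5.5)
Take a finite admissible model $\mathfrak A$ of $\Phi$ with $|A|>F_\Phi$, and let $P$ be the set of star types it realizes. Every element realizes exactly one star type of $P$. All matching sets $\Match_P(p,e)$ are nonempty, since each ray of a realized star type is witnessed by an actual neighbour, whose star type lies in $P$ and contains the inverse ray. Compute the closure $B$ and put $U=P\setminus B$.

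The first step is to show $U\neq\varnothing$. If $U$ were empty, Lemma~\ref{lem:bounds} would give
\[
 |A|=\sum_{p\in P}\#\{a:\ a\text{ realizes }p\}\leq\sum_{p\in B}K_p\leq(d+1)2^{|P|}\leq(d+1)2^{H}=F_\Phi ,
\]
using $|P|\leq H$. This contradicts $|A|>F_\Phi$, so some star type of $P$ is regular.

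Now $\mathfrak A$ is an admissible model realizing exactly $P$ with $U\neq\varnothing$, so Lemma~\ref{lem:extraction} yields an accepting certificate. Lemma~\ref{lem:extraction} nowhere assumes infiniteness: it only uses the actual core, interface, rows and witnesses. Proposition~\ref{lem:certificate} then converts this certificate into a countably infinite admissible model of $\Phi$. The equivalent form is just the contrapositive. The estimate $F_\Phi\leq2^{2^{n^{O(1)}}}$ follows from~\eqref{eq:star-types}, which gives $H\leq 2^{n^{O(1)}}$, together with $d=2k\leq n$.

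The only delicate point is checking that the bound $K_p$ of Lemma~\ref{lem:bounds} applies to finite models; it does, since it is stated for every model realizing exactly $P$. No real obstacle remains beyond this bookkeeping.
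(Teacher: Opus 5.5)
Your proof is correct and follows the paper's own argument: if $U$ were empty, Lemma~\ref{lem:bounds} would bound the model's size by $(d+1)2^{|P|}\leq F_\Phi$, so $U\neq\varnothing$, and Lemma~\ref{lem:extraction} followed by Proposition~\ref{lem:certificate} then yields a countably infinite admissible model, with the bound $F_\Phi\leq 2^{2^{n^{O(1)}}}$ coming from $H\leq 2^{n^{O(1)}}$. You also check explicitly that all matching sets are nonempty and that the extraction lemma does not need an infinite model; the paper leaves both points implicit, and your checks are correct.
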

\begin{proof}
Take the realized star type set $P$ of the given finite model. If all its
star types were bounded, Lemma~\ref{lem:bounds} would bound its size by
$(d+1)2^{|P|}\leq(d+1)2^H=F_\Phi$, using $|P|\leq H$.
Hence $U\ne\varnothing$;
extracting and realizing a certificate gives an infinite model.
The second exponential comes from exponentiating the singly exponential
number $H$ of legal star types.
\end{proof}

This is not a small-model property for finite satisfiability: it bounds
finite models only when there is no infinite model. In that case,
checking all structures up to $F_\Phi$ determines all possible model
sizes. Normalization preserves domain size, so the threshold also
applies to the original sentence.

\paragraph*{How large can the core be?}
An explicit core cannot always have singly exponential size.
Proposition~\ref{prop:large-core} in Appendix~\ref{sec:large-core}
gives sentences polynomial in $r$ forcing exactly $2^{2^r}-1$ bounded
elements, using only two successors and unary predicates; all their
models are infinite. This is a model-size lower bound, not
$\TwoNEXP$ hardness. 

\section{Finite models with bounded adjacency disagreement}
\label{sec:budget}
The preceding construction relies on infinitely many copies and a dense
ordering of components. For finite models, each successor must instead
form one spanning path. We now obtain a separate decision procedure
by bounding the number of reference adjacencies lost by the other paths.
Throughout this section all models are finite.

Define the \emph{disagreement cost}
relative to the first order by
\begin{equation}
 \Dis(\mathcal A)=\sum_{i=2}^{k}
             |S_1^{\mathcal A}\setminus S_i^{\mathcal A}|.
 \label{eq:cost}
\end{equation}
Budgeted finite satisfiability takes a sentence $\varphi$ and a
nonnegative integer $b$, and asks whether there is a finite admissible
model with $\Dis(\mathcal A)\leq b$. The orders, including the reference
order, are chosen with the model. Since every finite successor has
$|A|-1$ edges, each summand is also the number of added edges and half
the symmetric difference. A reference edge lost by two orders is counted
twice.

\begin{theorem}[Budgeted finite satisfiability]\label{thm:budget}
Budgeted finite satisfiability is $\NEXP$-complete for a binary-encoded
budget, uniformly in the number of successors and with arbitrary
additional unary and binary predicates. It is also $\NEXP$-complete
for unary-encoded budgets and for every fixed $b\geq0$.
Every satisfiable instance has a model of size at most
$2^{p(|\varphi|+k+b)}$, for a fixed polynomial $p$.
\end{theorem}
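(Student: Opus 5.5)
The plan has three parts: a polynomial-time reduction to finite satisfiability of two-variable counting logic with a single successor (the one-forest setting of~\cite{charatonikEtAl2018}), a matching $\NEXP$ lower bound, and a separate decomposition argument for the model-size bound. Hardness is immediate for every fixed $b\geq0$ and every encoding: take an ordinary $\FO$ sentence, add one unused successor $S_1$ (so $k=1$ and $\Dis=0$), and use finite $\NEXP$-hardness of $\FO$.

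For the upper bound, fix the reference path $S_1$ and keep it as the one successor of the target logic. Each other path $S_i$ is a spanning path that agrees with $S_1$ except on at most $b$ reference edges in total. Deleting the lost reference edges cuts $S_1$ into at most $b+1$ maximal blocks. Each block is traversed by $S_i$ either forward, or backward with reversed edges reused. In the backward case those edges are also lost, so they count toward the budget; hence essentially every block is traversed forward. The non-reference $S_i$-edges then connect block ends, and there are at most $b$ of them per order.

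I would encode this in $\mathrm C^2$ with the single successor $S_1$ as follows. First, introduce unary predicates marking the elements whose $S_1$-successor edge is lost by order $i$; a counting quantifier $\exists^{\leq b}$ bounds their total (binary encoding of counts keeps this polynomial). Second, for each $i$, store $O(\log(b+2))$ unary bits giving every element the binary rank of its $S_1$-block in the $S_i$-order. These ranks are propagated along $S_1$ and change exactly at lost edges. Third, realize $S_i$ as an ordinary binary predicate, axiomatized to be functional and inverse-functional using counting, to coincide with $S_1$ inside blocks, and to join the end of the block of rank $t$ to the start of the block of rank $t+1$. The rank constraint is the cycle exclusion mentioned in the introduction: successive blocks strictly increase in rank, so $S_i$ is acyclic. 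Together with exactly one element without successor and one without predecessor, this makes $S_i$ a single spanning path. The whole translation has size polynomial in $|\varphi|+k+\log(b+2)$, and finite $\mathrm C^2$ satisfiability with one successor is in $\NEXP$ by~\cite{charatonikEtAl2018}.

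The main obstacle is correctness of the rank encoding. I must check that every admissible model with $\Dis\leq b$ admits such ranks, since blocks are ordered by their first appearance in $S_i$. Conversely, I must check that the unary data forces each $S_i$ to be an exact spanning path, in particular that distinct blocks receive distinct ranks. For that I would require, via counting, at most one block start per rank value, so ranks behave as a permutation of the blocks.

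For the size bound $2^{p(|\varphi|+k+b)}$, I would first expand each element by its position data: its block ranks for all $i$, whether it is a block endpoint, and its 1-type. This partitions the $S_1$-path into at most $b+1$ common blocks whose interiors behave like a single order with its successor. I would then apply the exponential small-model bound of~\cite{zeume2016} for one order and its successor, with additional binary predicates, to each block interior. Pumping down inside a block keeps the endpoints fixed and therefore preserves every $S_i$. This yields a bound singly exponential in $b$.
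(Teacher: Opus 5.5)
Your reduction is essentially the paper's. For blocks cut at $S_1\setminus S_i$, the index of an element's block in the $S_i$-order is exactly the paper's rank $\rho_i$: the number of edges of $S_i\setminus S_1$ preceding it on the $S_i$-path. Your end-to-start joins with increment one are a stronger form of~\eqref{eq:budget-rank}. With them, distinct blocks automatically receive distinct ranks: $S_i$ contains every in-block $S_1$-edge, so it traverses each block as one segment. If you add a distinctness clause anyway, make it universal; one counting quantifier per rank value would be exponential in the length of $b$.

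The budget, however, is not enforced correctly. $\Dis$ counts a reference edge once for each order that loses it, whereas a counting quantifier counts elements. Bounding $Q_2\cup\cdots\cup Q_k$ by $b$ bounds only the number of distinct lost edges. On two elements with $S_2=S_3=S_1^{-1}$ that number is $1$, but $\Dis=2$. Bounding each $Q_i$ by $b$ separately allows cost $(k-1)b$. The fix is the paper's. Guess $b_2,\ldots,b_k$ in binary with $\sum_i b_i\leq b$, using $O(k\log(b+2))$ bits. Then impose $\exists^{\leq b_i}x\,Q_i(x)$ for each $i$, as in~\eqref{eq:budget-count}.

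The size bound has a genuine gap, although you have the right ingredients: at most $b+1$ common blocks and the theorem of~\cite{zeume2016}. That theorem is a statement about sentences. It returns some small model; it does not shrink a designated segment of a given model while keeping the rest fixed. So it cannot be applied \emph{to each block interior}. The interiors are not models of anything by themselves. $\alpha$ and the witness obligations of $\varphi$ connect elements of different blocks through arbitrary binary predicates, and an element may have its only witness in another block. Keeping endpoints fixed protects the $S_i$, but not the truth of $\varphi$.

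What is missing is a single sentence over one order $<$ and its successor $S$ that carries the whole configuration:
\begin{enumerate}
\item fix the $r\leq b+1$ common blocks and their permutations $\pi_i$;
\item mark the blocks by unary predicates $B_1,\ldots,B_r$ forced to be consecutive nonempty intervals of $<$;
\item define their first and last elements $F_j,L_j$ using $<$;
\item replace each $S_i$ by the quantifier-free decoder~\eqref{eq:block-decoder}.
\end{enumerate}
This sentence has size polynomial in $|\varphi|+k+b$. One application of the theorem yields a model of exponential size. Concatenating its blocks along the $\pi_i$ gives exact successors with unchanged cost by~\eqref{eq:block-cost}. Note also that the cited theorem concerns plain $\FO$ with an order and its successor, not $\mathrm C^2$. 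So this sentence must not reuse the counting clauses of your reduction. With the permutations fixed, functionality and the cost are structural consequences and need no counting.
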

The model-size bound depends on the value of $b$, which can be
exponential in its binary encoding length. The decision-time bound uses
that length: the algorithm below need not list a model.

\subsection{Reduction to one successor}
We use two-variable logic with counting quantifiers, $\mathrm C^2$.
The quantifier $\exists^{\leq t}y\,\theta$ means that at most $t$
distinct values of $y$ satisfy $\theta$; $\exists^{=t}$ is defined
similarly. Counts are written in binary.
Let $\mathrm C^2[S]$ denote this logic with one
distinguished successor, interpreted as a single finite spanning path,
and arbitrary additional unary and binary predicates. Its finite
satisfiability problem is $\NEXP$-complete for binary-encoded counts.
The upper bound follows from~\cite[Theorem~1]{charatonikEtAl2018}:
use the distinguished forest relation for $S$, bound its outdegree by
one, and require exactly one root (an element with no incoming edge).
Every finite forest satisfying these counting conditions is a spanning
path, including the one-element path. The lower bound follows from ordinary
$\FO$-satisfiability.

Keep $S_1$ distinguished and treat the other $S_i$ as ordinary predicates.
Their ranks stay constant on shared $S_1$-edges and increase on added
edges, excluding cycles since $S_1$ is acyclic. Degree and root
constraints then make each $S_i$ a spanning path.

Guess nonnegative integers $b_2,\ldots,b_k$ with
$\sum_{i=2}^k b_i\leq b$, all written in binary.
An admissible model of cost at most $b$ has at most $b$ added edges per path,
so ranks $0,\ldots,b$ suffice.
Put $w=\lceil\log_2(b+2)\rceil$: then $2^w>b$, with at least one
bit even when $b=0$. For each $i\geq2$, introduce fresh unary
predicates $U_{i,0},\ldots,U_{i,w-1}$ encoding a rank
$\rho_i(a)\in\{0,\ldots,2^w-1\}$, with bit $0$ least significant.
The quantifier-free comparisons are
\[
\begin{aligned}
 \operatorname{Eq}_i(x,y)
   &:=\bigwedge_{h<w}(U_{i,h}(x)\leftrightarrow U_{i,h}(y)),\\
 \operatorname{Lt}_i(x,y)
   &:=\bigvee_{h<w}\left(\neg U_{i,h}(x)\land U_{i,h}(y)
       \land\bigwedge_{h<\ell<w}
          (U_{i,\ell}(x)\leftrightarrow U_{i,\ell}(y))\right).
\end{aligned}
\]
They express equality and strict comparison of the ranks. Equality
compares $w$ bits; strict comparison has $w$ cases, each inspecting at
most $w$ bits. Thus the expansions have $O(w^2)$ atom occurrences.

For each $i\geq2$, include the degree and root conditions
\begin{equation}
 \forall x\exists^{\leq1}y\,S_i(x,y),\qquad
 \forall x\exists^{\leq1}y\,S_i(y,x),\qquad
 \exists^{=1}x\,\neg\exists y\,S_i(y,x),
 \label{eq:budget-degree}
\end{equation}
and the rank condition
\begin{equation}
 \forall x\forall y\left(S_i(x,y)\to
 \left((S_1(x,y)\land\operatorname{Eq}_i(x,y))
 \lor(\neg S_1(x,y)\land\operatorname{Lt}_i(x,y))\right)\right).
 \label{eq:budget-rank}
\end{equation}
Finally, use a fresh unary predicate $Q_i$ to count the tails of lost
reference edges:
\begin{equation}
 \forall x\left(Q_i(x)\leftrightarrow
       \exists y\,(S_1(x,y)\land\neg S_i(x,y))\right),
 \qquad \exists^{\leq b_i}x\,Q_i(x).
 \label{eq:budget-count}
\end{equation}
Let $\psi_{\mathbf b}$ be the conjunction of $\varphi$ with
\eqref{eq:budget-degree}--\eqref{eq:budget-count} for all $i\geq2$,
expanding the comparisons. Each of the $k-1$ relations adds $O(w^2)$
atom occurrences and a count $b_i\leq b$, written with at most $w$ bits.
Thus the resulting $\mathrm C^2[S_1]$ sentence has size polynomial in
the input length $N=|\varphi|+k+w$.

\subsection{Correctness and complexity}
Suppose first that $\psi_{\mathbf b}$ has a finite model. On an
$S_i$-cycle the rank cannot decrease and increases strictly on any edge
outside $S_1$. Hence such a cycle would consist entirely of $S_1$-edges,
contrary to the semantics of the distinguished successor.
Together with~\eqref{eq:budget-degree}, acyclicity makes $S_i$ a finite
union of directed paths, including possible isolated vertices. Each
component has one vertex with no predecessor, and there is exactly one
such vertex in total. Thus $S_i$ is a single spanning path.
Since $S_1$ is functional, $|Q_i|=|S_1\setminus S_i|$.
Equation~\eqref{eq:budget-count} therefore gives total cost at most
$\sum_{i=2}^k b_i\leq b$, and the original-signature reduct is an
admissible model of $\varphi$.

Conversely, let $\mathcal A$ be a finite admissible model of cost at
most $b$. Choose $b_i=|S_1\setminus S_i|$. Both paths have $|A|-1$
edges, so $|S_i\setminus S_1|=b_i$. For an element $a$, let
$\rho_i(a)$ count the edges in $S_i\setminus S_1$ on the $S_i$-path
from its first element to $a$. These ranks lie between $0$ and
$b_i\leq b<2^w$ and therefore fit into the chosen bits. They remain
constant on shared edges and increase by one on the other edges,
satisfying~\eqref{eq:budget-rank}. The degree, root, and cost conditions
hold as well. This expands $\mathcal A$ to a finite model of
$\psi_{\mathbf b}$.

Each guessed $b_i\leq b$ takes at most $w$ bits, so all guesses use
$O(kw)=O(k\log(b+2))$ bits. Applying the
$\NEXP$ procedure for finite satisfiability of $\mathrm C^2[S_1]$
to the polynomial-size sentence $\psi_{\mathbf b}$ gives time
$2^{N^{O(1)}}$. This also covers $k=1$, when no extra constraints
are needed, and $b=0$, when all ranks in the completeness argument
are zero. The procedure need not list a model explicitly.

For hardness, identical successors have cost zero, so the usual
$\FO$ reduction works for every fixed $b\geq0$ and $k\geq1$.

\subsection{The finite-model size bound}
To prove the size bound, we encode the successors by at most
\(b+1\) common blocks and apply a small-model theorem for one order.
Fix a finite admissible model \(A\models\varphi\) of cost at most \(b\).
Let \(L=\bigcup_{i=2}^k(S_1\setminus S_i)\).
Since the union counts each lost edge once, \(|L|\le D(A)\le b\).
Cutting \(L\) in the reference path gives
\(r=|L|+1\le b+1\) nonempty directed blocks.
Their internal edges belong to every \(S_i\), so functionality
and inverse-functionality force each block to remain consecutive
in every order. Thus every order permutes the blocks without
reversing them.
Number the blocks in reference order, and let
\(\pi_1=\mathrm{id},\pi_2,\ldots,\pi_k\) be their permutations.
Only block boundaries contribute to the cost.
Writing
\(\mathrm{Adj}(\pi)=\{(\pi(t),\pi(t+1)):1\le t<r\}\),
we have
\begin{equation}
 |S_1\setminus S_i|
   =(r-1)-|\Adj(\pi_i)\cap\Adj(\mathrm{id})|.
 \label{eq:block-cost}
\end{equation}
For example, three blocks can be arranged as follows:
\begin{center}
\begin{tabular}{@{}lcc@{}}
\toprule
Order & Block sequence & Lost reference adjacencies \\
\midrule
Reference & $B_1\,B_2\,B_3$ & --- \\
Second & $B_2\,B_3\,B_1$ & $B_1\to B_2$ \\
Third & $B_3\,B_1\,B_2$ & $B_2\to B_3$ \\
\bottomrule
\end{tabular}
\end{center}
The cost is two, independently of block lengths.
Keeping the permutations fixed will preserve this cost when
we apply the small-model theorem.

We now encode these fixed permutations using one linear order
\(<\) and its successor \(S\).
Fresh unary predicates \(B_1,\ldots,B_r\) mark the blocks
in reference order.
Require them to partition the domain into nonempty sets and
put every \(B_a\)-element before every \(B_c\)-element whenever
\(a<c\); this makes each block an interval.
To express the connections between blocks, define predicates
\(F_j,L_j\) for their first and last elements:
\[
\begin{aligned}
 \forall x\bigl(F_j(x)&\leftrightarrow
   (B_j(x)\land\neg\exists y\,(B_j(y)\land y<x))\bigr),\\
 \forall x\bigl(L_j(x)&\leftrightarrow
   (B_j(x)\land\neg\exists y\,(B_j(y)\land x<y))\bigr).
\end{aligned}
\]

The partition and block order use \(O(r^2)\) clauses, and
the endpoint definitions use \(O(r)\).
In a finite model each nonempty block has unique endpoints,
possibly the same element.
In \(\varphi\), replace \(S_1\) by \(S\) and every \(S_i\),
\(i\ge2\), by
\begin{equation}
\begin{split}
 \widehat S_i(x,y)={}&
 S(x,y)\land\bigvee_{j=1}^r(B_j(x)\land B_j(y))\lor\bigvee_{t=1}^{r-1}
        (L_{\pi_i(t)}(x)\land F_{\pi_i(t+1)}(y)).
\end{split}
 \label{eq:block-decoder}
\end{equation}

The first disjunct preserves edges within blocks; the second
joins successive blocks in order \(\pi_i\).
The substitution is quantifier-free, including for reversed
or repeated arguments.
Let \(\psi\) be the substituted sentence with the block and
endpoint clauses.
Marking the blocks and endpoints of \(A\) gives a model of \(\psi\).

Each replacement has \(O(r)\) atom occurrences.
With the \(O(r^2)\) block clauses and \(r\le b+1\),
the length of \(\psi\) is polynomial in \(|\varphi|+k+b\).
The small-model theorem for one order and its successor,
with additional unary and binary predicates~\cite[Section~3]{zeume2016},
gives a finite model of \(\psi\) of size exponential in this length.
Reconstruct the \(S_i\) by concatenating blocks according to
\(\pi_i\), leaving the other predicates unchanged.
The result satisfies \(\varphi\), retains this small model's
domain, and has the original cost by (15).
Hence its size is at most \(2^{p(|\varphi|+k+b)}\),
proving Theorem~13.

Trying all budgets does not decide unrestricted finite
satisfiability: the argument provides no computable stopping
budget for instances with no finite model.

\section{Conclusion}
Arbitrarily many independent successor relations preserve decidability
of $\FO$-satisfiability over arbitrary order types, with a uniform
$\TwoNEXP$ upper bound and unrestricted additional binary predicates.
The proof separates local compatibility from global realization:
a finite interface records bounded multiplicities and their witness
requirements, while integer heights and dense completion realize the
remaining successor structure. This supplies finite certificates despite
the failure of the finite-model property. The same certificates decide
infinite satisfiability and bound every model of a sentence that has
only finite models.

For finite structures, bounding adjacency disagreement gives a
different way to control the interaction of successors.
The resulting problem remains $\NEXP$-complete with both the number
of successors and a binary budget in the input. Thus allowing a
specified amount of disagreement does not increase the complexity
class beyond the one-successor case. Common blocks explain the
structural restriction, while binary ranks make its succinct encoding
algorithmically useful.

Three questions remain. Can the gap between the $\NEXP$ lower bound
and the $\TwoNEXP$ upper bound for general satisfiability be closed?
The forced doubly exponential core limits explicit enumeration, but
does not establish a matching complexity lower bound. Can certificates
be adapted to scattered or well-founded underlying orders, for which
dense completion is unavailable? Finally, can the interaction of three
or more successors be controlled in finite models without a supplied
disagreement budget? Resolving these questions would clarify
which aspects of order type and interaction between adjacencies govern
the boundary of decidability.

% Main text ends here; the title page and references are counted separately.
\label{end:main}
\clearpage
\bibliography{references}
\

\paragraph*{Use of generative AI.}
During the preparation of this work, the authors utilized GPT-6 Astra
to assist with problem-solving, text condensation, and manuscript
editing. Specifically, based on the authors' initial notes, GPT-6
Astra proposed a very complex proof argument to address the
decidability problem; the authors subsequently spent substantial
effort deconstructing, simplifying, and fully comprehending this
argument, ultimately rewriting nearly the entire proof and
independently verifying its mathematical correctness. Additionally,
GPT-6 Astra and GPT-6.1 SOL were employed as supporting tools for
proofreading, stylistic refinement, and text reduction. The authors
have thoroughly reviewed and edited all content and accept full
responsibility for the final accuracy and integrity of the
publication.

\clearpage
\appendix
\section{Scott normalization on the same domain}\label{app:normalization}
Given $\varphi$, introduce fresh unary predicates $\bar U$ and construct
a sentence
\begin{equation}
 \Phi=\forall x\forall y\,\alpha(x,y)\land
       \bigwedge_{j=1}^{m}\forall x\exists y\,\beta_j(x,y)
 \label{a:scott}
\end{equation}
such that, for every original structure $\mathcal A$,
$\mathcal A\models\varphi$ iff some expansion
$(\mathcal A,\bar U)$ satisfies $\Phi$~\cite{gradel1997}.
For example, the definition $U(x)\leftrightarrow\exists y\,\theta(x,y)$
is replaced by
\begin{align*}
 &\forall x\exists y(\neg U(x)\lor\theta(x,y)),\\
 &\forall x\forall y(U(x)\lor\neg\theta(x,y)).
\end{align*}
The clauses express the two implications. Nonemptiness supplies a witness
for the first clause at elements where $U$ is false. For a closed
subformula $\exists y\,\theta(y)$, use a fresh unary predicate $U$ and impose
\[
 \forall x\forall y\,(U(x)\leftrightarrow U(y)).
\]
Its definition is expressed by the same two clauses above, with
$\theta(y)$ in place of $\theta(x,y)$. Nonemptiness ensures that $U$
has one global truth value and that these clauses equate that value
with the existential sentence. Universal quantifiers are handled by
$\forall y\,\theta\equiv\neg\exists y\,\neg\theta$. Subformulas whose
only free variable is $y$ are treated by exchanging $x$ and $y$
simultaneously; no third variable is needed.

Replace quantified subformulas from the inside out, retaining their
separate definitions rather than expanding them at every occurrence.
There is at most one fresh predicate and a constant number of defining
clauses per quantified subformula; no exponential unfolding is needed.
The remaining Boolean combination of predicates naming closed subformulas
is imposed at every element. A clause with one universally quantified
variable can be given a second, unused universal quantifier. An existential
sentence $\exists y\,\zeta(y)$ can be written $\forall x\exists y\,\zeta(y)$,
again using nonemptiness.
Combining the universal clauses gives~\eqref{a:scott} in polynomial size.
The new predicates are chosen in the expansion; the domain and all original
predicate interpretations are unchanged.
Conversely, the defining clauses force the named subformulas to have their
intended truth values, so forgetting the fresh predicates recovers
$\varphi$. Finally, conjoining $\alpha(y,x)$ to $\alpha(x,y)$ changes no
model because both variables are universally quantified.

\section{The bounded part can be doubly exponential}
\label{sec:large-core}
Polynomial-size sentences can force a doubly exponential finite set
in every model, even when all their models are infinite.

\begin{proposition}\label{prop:large-core}
For every $r\geq1$ there is a satisfiable sentence $\theta_r$, of size
polynomial in $r$, with two distinguished successors and only additional
unary predicates, such that every admissible model has exactly
$2^{2^r}-1$ elements in a designated unary predicate $T$.
All these elements realize bounded star types of $\theta_r$.
Every admissible model of $\theta_r$ is infinite.
\end{proposition}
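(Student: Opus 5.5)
The plan is to let $T$ be the node set of a complete binary tree with levels $0,\ldots,2^r-1$, since $\sum_{l<2^r}2^l=2^{2^r}-1$. Left children are reached by $S_1$ and right children by $S_2$. The depth is controlled by an $r$-bit level counter stored in unary predicates $V_0,\ldots,V_{r-1}$. Infinity is forced separately by $\forall x\exists y\,S_1(x,y)$: every finite admissible structure has an $S_1$-path with a last element, so it violates this clause.

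\textbf{The sentence.} Let $\mathrm{Lev}_0(x)$ say that all bits are $0$, and let $\mathrm{Max}(x)$ say that all bits are $1$. Let $\mathrm{Inc}(x,y)$ be the quantifier-free formula saying that the level of $y$ is the level of $x$ plus one. It has $O(r^2)$ atoms and is built like $\operatorname{Lt}_i$ in Section~\ref{sec:budget}: for some bit $h$, bit $h$ flips from $0$ to $1$, all lower bits flip from $1$ to $0$, and all higher bits agree. Fresh unary predicates $L$ and $R$ mark left and right children. The sentence $\theta_r$ is the conjunction of the following clauses.
\begin{enumerate}[(i)]
\item $\exists x\,(T(x)\land\mathrm{Lev}_0(x))$ and $\forall x\forall y\,(T(x)\land\mathrm{Lev}_0(x)\land T(y)\land\mathrm{Lev}_0(y)\to x=y)$: there is a unique root.
\item $\forall x\,(T(x)\land\neg\mathrm{Max}(x)\to\exists y\,(S_1(x,y)\land T(y)\land L(y)\land\mathrm{Inc}(x,y)))$, and the same clause with $S_2$ and $R$ in place of $S_1$ and $L$.
\item $\forall x\forall y\,(T(x)\land T(y)\land S_1(x,y)\to\neg\mathrm{Max}(x)\land L(y)\land\mathrm{Inc}(x,y))$, and the same clause with $S_2$ and $R$.
\item Every non-root $T$-element satisfies exactly one of $L$ and $R$, and $L$ and $R$ are false on the root.
\item $\forall x\,(T(x)\land L(x)\to\exists y\,(S_1(y,x)\land T(y)))$, and the same clause with $R$ and $S_2$.
\item $\forall x\exists y\,S_1(x,y)$.
\end{enumerate}
Its size is polynomial in $r$.

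\textbf{Exact count.} Fix an admissible model. By (iii) and (iv), each non-root $T$-element $x$ has exactly one $T$-parent. If $x\in L$, then $x$ has no $T$-element as $S_2$-predecessor, because (iii) would force $R(x)$. By (v) and inverse-functionality of $S_1$, it has exactly one $T$-element as $S_1$-predecessor. The case $x\in R$ is symmetric.

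By (iii), a parent has level one less than its child. Following parents therefore terminates at level $0$, and the element reached there is the root by (i). Hence the $T$-parent relation is a tree rooted at the root, and levels equal depths. By functionality, a node at level below $2^r-1$ has exactly one left child and one right child, given by (ii). These children are distinct because one satisfies $L$ and the other $R$. Nodes at maximal level have no $T$-children by (iii).

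Induction on the level shows that level $l$ contains exactly $2^l$ elements. This gives $|T|=2^{2^r}-1$.

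\textbf{Satisfiability.} Build the tree explicitly. Every $S_1$-chain of left children ends at a maximal-level node. Continue each such chain, and each non-left-child start, by a fresh $\omega$-chain of non-$T$ elements. Add infinitely many further disjoint $\omega$-chains in both colours, so that each colour graph is countable, acyclic, functional and inverse-functional, with infinitely many components. Lemma~\ref{lem:dense} then yields admissibility. Clause (vi) holds because every $S_1$-component is infinite to the right.

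\textbf{Boundedness.} I expect this step to be the main obstacle, because the closure~\eqref{eq:closure} works with all star types that share a 1-type. Let $P$ be the realized star type set. I prove, by induction on $l$, that every $p\in P$ whose 1-type contains $T$ and level $l$ lies in $B$.
\begin{itemize}
\item For $l=0$: $\alpha$ forbids two distinct root elements, so $A_{\sigma\sigma}=\varnothing$ for the realized root 1-type $\sigma$. Hence these star types lie in $B_0$.
\item For $l>0$: by (iv) and (v), every such star type has a parent ray with mask $\{1^-\}$ or $\{2^-\}$. The neighbour 1-type of this ray contains $T$ and level $l-1$, because that is what the 1-type records. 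All its matches therefore have such a source 1-type and lie in $B$ by induction, so the closure adds $p$.
\end{itemize}
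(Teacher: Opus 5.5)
Your proposal is correct and follows essentially the same route as the paper: a binary level counter in unary predicates, $S_1$/$S_2$ edges to children distinguished by a unary marker, a unique root and a parent clause, a level-by-level count, and boundedness by induction starting from the root's empty far table $A_{\sigma\sigma}$ and passing through parent rays whose matches all lie one level lower. The only real difference is how infinity is forced: you use $\forall x\exists y\,S_1(x,y)$ and constrain only $T$--$T$ edges, so your model must attach $\omega$-chains to the leaves, whereas the paper confines all successor edges to $T$ and requires an element outside $T$, which is then isolated and cannot lie on a finite spanning path with at least two vertices.
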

\begin{proof}
Use unary predicates $T,L,D_0,\ldots,D_{r-1}$, with the $D_j$ encoding
a level in binary, least significant bit first. Let $Z(x)$ and $M(x)$
say that all bits are zero and all bits are one, respectively. Let
$I(x,y)$ say that the level of $y$ is the level of $x$ plus one,
without overflow. Explicitly,
\[
 I(x,y)=\neg M(x)\land
 \bigwedge_{j<r}\left(D_j(y)\leftrightarrow
    \left(D_j(x)\mathbin{\oplus}\bigwedge_{\ell<j}D_\ell(x)\right)\right),
\]
where $\oplus$ abbreviates exclusive or and the empty conjunction is
true. There are $r$ clauses, each inspecting at most $r$ bits, so
expansion gives $O(r^2)$ atom occurrences. The other requirements below
also have polynomial size. Require a unique
$T$-element at level zero, called the root, and at least one element
outside $T$.
Constrain the successor edges by
\begin{align*}
 \forall x\forall y\bigl(S_1(x,y)&\to
             (T(x)\land T(y)\land I(x,y)\land L(y))\bigr),\\
 \forall x\forall y\bigl(S_2(x,y)&\to
             (T(x)\land T(y)\land I(x,y)\land\neg L(y))\bigr).
\end{align*}
Every $T$-element below the maximum level must have a successor in
each colour; every $T$-element above level zero must have a predecessor
in at least one colour. These requirements are the two-variable clauses
\begin{align*}
 &\bigwedge_{i=1}^2\forall x\bigl((T(x)\land\neg M(x))\to
                              \exists y\,S_i(x,y)\bigr),\\
 &\forall x\bigl((T(x)\land\neg Z(x))\to
                         \exists y\,(S_1(y,x)\lor S_2(y,x))\bigr).
\end{align*}

The two children of one element are distinct because they disagree on
$L$. Different parents cannot share a child in one colour, by
inverse-functionality, or across colours, by the same $L$-constraint. Conversely,
every positive-level $T$-element is a child of an element on the
preceding level. Starting from the unique root, level $j$ therefore has
exactly $2^j$ elements. The $r$ bits allow levels $0,\ldots,2^r-1$, so
\[
 |T|=\sum_{j=0}^{2^r-1}2^j=2^{2^r}-1.
\]

For satisfiability, take the full binary tree of this height, using
$S_1$ for left-child edges and $S_2$ for right-child edges, and add
countably infinitely many elements outside $T$ with no incident edges.
In each colour this graph is functional, inverse-functional, acyclic,
and has infinitely many components. Lemma~\ref{lem:dense} supplies the
required orders. No finite model is possible: an element outside $T$
has neither a predecessor nor a successor, whereas a finite spanning
path with more than one vertex has no isolated vertex.

The sentence can be written directly in Scott normal form without fresh
unary predicates. Fix any realized star type set $P$. Let $\rho$ be the root's
1-type. Uniqueness of a $T\land Z$ element makes the set
$A_{\rho\rho}$ of permitted far assignments from~\eqref{eq:far-table}
empty, so the root's star type lies in the initial closure set $B_0$
from~\eqref{eq:closure}.
Every realized $T$-star type on level $j>0$ has a predecessor ray,
because the predecessor clause must hold at each realization. All
star types matching that ray lie at level $j-1$: this follows from
the increment constraint on its inverse edge. Induction on $j$
therefore puts every such star type into the boundedness closure $B$.
\end{proof}

Consequently, an explicit interface need not have singly exponential
size. This is a lower bound on the size of a forced finite part, not a
$\TwoNEXP$ hardness result: a more succinct algorithm could represent
that part without listing all its elements.

\section{A worked certificate}\label{app:example}\label{ex:specification}
Take unary predicates $K,A,V$, two successor relations constrained by
$S_1=S_2$, and an additional binary predicate $E$. The unique $K$-element
starts a path alternating $V$- and $A$-elements. The relation $E$ is
asymmetric and may hold only on far pairs. The example exhibits both
boundedness propagation and the need for several rows of one star type.

Let $\operatorname{Part}(x)$ say that exactly one of $K(x),A(x),V(x)$
holds, and abbreviate
\[
 N(x,y)=\bigvee_{i=1}^{2}(S_i(x,y)\lor S_i(y,x)).
\]
Define the following quantifier-free formulas:
\begin{align*}
 \vartheta(x,y)&=((K(x)\lor A(x))\land V(y))\lor(V(x)\land A(y)),\\
 \alpha_0(x,y)&=\operatorname{Part}(x)
       \land((K(x)\land K(y))\to x=y)\\
 &\quad{}\land(S_1(x,y)\leftrightarrow S_2(x,y))\\
 &\quad{}\land(S_1(x,y)\to\vartheta(x,y))\\
 &\quad{}\land(E(x,y)\to(\neg E(y,x)\land\neg N(x,y))).
\end{align*}
Use $\alpha=\alpha_0(x,y)\land\alpha_0(y,x)$ and five witness matrices:
\begin{equation}
\begin{aligned}
 \beta_1(x,y)&=K(y),\\
 \beta_2(x,y)&=S_1(x,y),\\
 \beta_3(x,y)&=K(x)\lor S_1(y,x),\\
 \beta_4(x,y)&=\neg K(x)\lor(A(y)\land E(x,y)),\\
 \beta_5(x,y)&=\neg A(x)\lor(E(x,y)\land(K(y)\lor V(y))).
\end{aligned}
\label{eq:worked-beta}
\end{equation}
Together with $\alpha$, these matrices define $\Phi_\star$ in
Scott normal form, as in~\eqref{eq:scott}.
The first obligation makes $K$ nonempty; uniqueness comes from $\alpha$.
The next two require a successor for every element and a predecessor for
every non-$K$ element. The last two require the $K$-element to have an $E$-edge to an $A$-element, and each $A$-element to have an $E$-edge to the $K$-element or a $V$-element. No auxiliary unary predicates are needed for Scott normalization.

\paragraph*{Star types and closure.}\label{ex:star-types}
Let $\sigma_K,\sigma_A,\sigma_V$ be the three 1-types; their $E$
diagonal is false by asymmetry. Write $M^+=\{1^+,2^+\}$,
$M^-=\{1^-,2^-\}$. The empty assignment to $E$ is $\varnothing$.
Guess $P=\{p_K,p_0,p_A,p_V\}$ as follows.
\begin{center}
\small
\setlength{\tabcolsep}{4pt}
\begin{tabular}{@{}ccl@{}}
\toprule
Star type & 1-type & Rays \\
\midrule
$p_K$ & $\sigma_K$ & $(\sigma_V,M^+,\varnothing)$ \\
$p_0$ & $\sigma_V$ & $(\sigma_K,M^-,\varnothing),\ (\sigma_A,M^+,\varnothing)$ \\
$p_A$ & $\sigma_A$ & $(\sigma_V,M^-,\varnothing),\ (\sigma_V,M^+,\varnothing)$ \\
$p_V$ & $\sigma_V$ & $(\sigma_A,M^-,\varnothing),\ (\sigma_A,M^+,\varnothing)$ \\
\bottomrule
\end{tabular}
\end{center}
Every ray has an inverse match. The only empty set of permitted far assignments is
$A_{\sigma_K\sigma_K}$, since two distinct elements of 1-type $\sigma_K$ are forbidden;
all other such sets contain $\varnothing,\{E\},\{E^{-1}\}$.
Applying the boundedness closure~\eqref{eq:closure}, we obtain
$B_0=\{p_K\}$. The predecessor ray of $p_0$ has only $p_K$
as a match, so $p_0$ is added next. No further star type is added:
the predecessor ray of $p_A$ matches both $p_0$ and $p_V$, its
successor ray matches $p_V$, and both rays of $p_V$ match $p_A$.
Thus the bounded and regular star types are, respectively,
\[
 B=\{p_K,p_0\},\qquad U=\{p_A,p_V\}.
\]
Here $d=2k=4$. The seed bound is $K_{p_K}=d+1=5$; the sole predecessor
match of $p_0$ is $p_K$, so propagation gives $K_{p_0}=K_{p_K}=5$.
These are safe upper bounds: uniqueness and port functionality make
both actual multiplicities one.
In particular, $p_0$ and $p_V$ have the same 1-type but different
boundedness status. The closure must operate on star types, not 1-types.

\paragraph*{The interface and rows.}\label{ex:court}
Choose the bounded core $C_B=\{c,v_0\}$ and the interface
$C=\{c,v_0,a_0\}$ with labels
$p_K,p_0,p_A$, respectively. In both colours, the interface edges are
$c\to v_0\to a_0$. Set $E(a_0,c)$ true and all other $E$-atoms on the interface false. The only free
ray occurrence is the successor ray at $a_0$; it matches $p_V$.
The core budget is $K_{p_K}+K_{p_0}=10$. Allowing each core element
itself and four neighbours gives the interface bound
$(d+1)(K_{p_K}+K_{p_0})=50$ in~\eqref{eq:court}; our interface uses
only three elements and passes the structural checks.

The row bound~\eqref{eq:rows} allows $|U|+m|C|=2+5\cdot3=17$ rows.
We use only the three rows $r_A=(p_A,f_A)$, $r_V=(p_V,f_V)$, and
$r_\star=(p_A,f_\star)$ shown below. In a row with source $z$, the
entry $\{E\}$ at $u$ means $E(z,u)$, while $\{E^{-1}\}$ means
$E(u,z)$.
\begin{center}
\small
\setlength{\tabcolsep}{5pt}
\begin{tabular}{@{}ccccc@{}}
\toprule
Row & Star type & At $c$ & At $v_0$ & Witness for $\beta_5$ \\
\midrule
$r_A$ & $p_A$ & $\{E\}$ & $\varnothing$ & $c$ \\
$r_V$ & $p_V$ & $\varnothing$ & $\varnothing$ & Source itself \\
$r_\star$ & $p_A$ & $\{E^{-1}\}$ & $\varnothing$ & A far $V$-element \\
\bottomrule
\end{tabular}
\end{center}
All three rows are feasible. Obligation $\beta_1$ uses $c$;
$\beta_2,\beta_3$ use the successor and predecessor rays; and
$\beta_4$ holds on the diagonal. For $\beta_5$, row $r_A$ uses
$c$, row $r_V$ uses its source, and row $r_\star$ uses a far
$V$-element with assignment $\{E\}$. Thus the two rows of star
type $p_A$ satisfy the same obligation in different ways.

The $K$-element's obligation $\beta_4$ cannot use $a_0$, since
$E(a_0,c)$ excludes $E(c,a_0)$. Choose a new element $a_\star$
with row $r_\star$ and assignment $\gamma=\{E\}$ from $c$.
The compatibility condition~\eqref{eq:pin} between the core and the row holds because
$f_\star(c)=\{E^{-1}\}=\Inv(\gamma)$. Row $r_A$ would not work
for this choice. Let $v_+$ denote the future successor of $a_0$.
The interface witness choices are:
\[
\begin{array}{c|ccccc}
\text{Source} & \beta_1 & \beta_2 & \beta_3 & \beta_4 & \beta_5\\ \hline
 c   & c & v_0 & c & a_\star & c\\
 v_0 & c & a_0 & c & v_0 & v_0\\
 a_0 & c & v_+ & v_0 & a_0 & c
\end{array}
\]
Here $v_+$ selects a free ray occurrence, and $a_\star$ selects the row
$r_\star$ and assignment $\{E\}$. The other entries select elements
already in $C$.

\paragraph*{Completing the model.}\label{ex:realization}
We use the construction with integer heights and groups from
Section~\ref{sec:heights}.
Set both interface heights to $0,1,2$ at $c,v_0,a_0$.
The height displacement for the common mask is $\delta(M^+)=(1,1)$,
so attach the free ray occurrence
at $a_0$ to a copy $v_+$ of row $r_V$ at height $(3,3)$ in
group $0$. Reserve a distinct copy $a_\star$ of row $r_\star$ at
height $(0,0)$ in that group for the $K$-element's $E$-obligation.
Its row already sets $E(c,a_\star)$, and it receives no successor
edge to the interface. Choose a copy $w$ of row $r_V$ in group $1$
and set $E(a_\star,w)$.
This pair is far because successor edges preserve the group.
Figure~\ref{fig:worked-certificate} shows these finite choices.

% Keep the diagram and its explanation together after the expanded setup.
\newpage
\begin{figure}[H]
\centering
\begin{tikzpicture}[x=1cm,y=1cm,>=Stealth,
  task/.style={circle,draw,minimum size=6.5mm,inner sep=1pt,font=\small},
  every node/.style={font=\footnotesize}]
\node[task] (c) at (0,0) {$c$};
\node[task] (v0) at (1.65,0) {$v_0$};
\node[task] (a0) at (3.3,0) {$a_0$};
\node[task] (vp) at (5.25,0) {$v_+$};
\node[task] (ast) at (1.8,-1.8) {$a_\star$};
\node[task] (w) at (5.25,-1.8) {$w$};
\node[below=2pt] at (c.south) {$K$};
\node[below=2pt] at (v0.south) {$V$};
\node[below=2pt] at (a0.south) {$A$};
\node[below=2pt] at (vp.south) {$V,\ g=0$};
\node[below=2pt] at (ast.south) {$A,\ g=0$};
\node[below=2pt] at (w.south) {$V,\ g=1$};
\draw[->,thick] (c) -- node[above] {$S_1,S_2$} (v0);
\draw[->,thick] (v0) -- node[above] {$S_1,S_2$} (a0);
\draw[->,thick] (a0) -- node[above] {$S_1,S_2$} (vp);
\draw[->,dashed] (a0.north) to[out=115,in=65,looseness=1.3]
  node[above] {$E(a_0,c)$} (c.north);
\draw[->,dashed] (c.south west) to[out=235,in=180]
  node[pos=.63,below left] {$E$} (ast.west);
\draw[->,dashed] (ast) -- node[above] {$E$} (w);
\begin{scope}[on background layer]
\node[draw,densely dotted,rounded corners,fit=(c)(v0),inner xsep=9pt,inner ysep=13pt] (core) {};
\node[draw,rounded corners,fit=(c)(v0)(a0),inner xsep=14pt,inner ysep=18pt] (interface) {};
\end{scope}
\node[below left=1pt] at (interface.south east) {$C$};
\node at (0.82,0.54) {$C_B$};
\end{tikzpicture}
\caption{The interface and its reserved witnesses. The interface is inside the
solid box, and its core is inside the dotted box. Solid arrows are pairs
in both $S_1$ and $S_2$; dashed arrows are $E$-edges. The attachment
target $v_+$ and the far witness $a_\star$ are distinct new elements.
The remaining neighbours are supplied by matching at integer heights.}
\label{fig:worked-certificate}
\end{figure}

Match the remaining rays at displacement $(1,1)$, independently in
each group. Every new element uses its row: copies of $r_A$ have
$E(z,c)$, while copies of $r_\star$ have $E(c,z)$ and choose an
$E$-witness of row $r_V$ in the next group. These choices satisfy
$\beta_5$ for every copy, not just $a_\star$.
All other obligations are already supplied by the interface, the rows,
and the matched rays. Dense completion gives the two underlying orders
without adding successor pairs.

\paragraph*{Necessity of the certificate conditions.}\label{ex:shortcuts}
Splitting a ray with mask $M^+$ into independently chosen neighbours
would violate $S_1=S_2$. Choosing row $r_A$ for $a_\star$ would
conflict with $E(c,a_\star)$, violating~\eqref{eq:pin}.
Agreement on that pair alone does not ensure feasibility: if we replace $\beta_5$ by
\[
 \beta'_5(x,y)=\neg A(x)\lor(E(x,y)\land K(y)),
\]
row $r_A$ remains feasible but $r_\star$ does not. Its source now
needs $E(a_\star,c)$ and cannot use $w$ instead.
The strengthened sentence is unsatisfiable: an $A$-element witnessing
the unique $K$-element's $E$-obligation would require the reverse $E$-edge,
contradicting asymmetry.
\end{document}